\newtheorem{claimx}{{\bf Claim}}
\newtheorem{observation}{Observation}
\newif\ifarxive
\newcommand{\springerarxiv}[2]{{#2}\xspace}
\newcommand{\springerarxiv}[2]{{#1}\xspace}
\spnewtheorem*{sketch}{Sketch of proof}{\itshape}{\rmfamily}
\title{Planar L-Drawings of Directed Graphs\thanks{\ack}}
\newcommand{\ack}{This research was initiated at
  the Bertinoro Workshop on Graph Drawing 2017. 
  This article reports
  on work supported by the U.S.~Defense Advanced Research Projects
  Agency (DARPA) under agreement no.~AFRL FA8750-15-2-0092. The views
  expressed are those of the authors and do not reflect the official
  policy or position of the Department of Defense or the U.S.\
  Government.  This research was also partially supported by MIUR
  project ``MODE -- MOrphing graph Drawings Efficiently'',
  prot.~20157EFM5C\_001.
  }
\authorrunning{Chaplick~\etal}
\author{Steven Chaplick \inst1 \and Markus Chimani \inst2 \and Sabine
  Cornelsen \inst3 \and Giordano~Da~Lozzo \inst4 \and
  Martin~N\"ollenburg \inst5 \and Maurizio~Patrignani \inst6 \and
  Ioannis~G.~Tollis \inst7 \and Alexander~Wolff \inst8}
\institute{Universit\"at W\"urzburg, Germany;
\email{steven.chaplick@uni-wuerzburg.de} 
\and Universit\"at Osnabr\"uck, Germany;
\email{markus.chimani@uni-osnabrueck.de}
\and Universit\"at Konstanz, Germany;
\email{sabine.cornelsen@uni-konstanz.de}
\and University of California, Irvine, CA, USA;
\email{gdalozzo@uci.edu}
\and TU Wien, Austria; 
\email{noellenburg@ac.tuwien.ac.at}
\and Roma Tre University, Rome, Italy; 
\email{patrigna@dia.uniroma3.it} 
\and University of Crete, Heraklion, Greece;
\email{tollis@csd.uoc.gr}
\and
Universit\"at W\"urzburg, Germany; 
\href{http://orcid.org/0000-0001-5872-718X}{\color{black}\tt
  orcid.org/0000-0001-5872-718X}
}
\newcommand{\st}{st}
\def\etal{{\em et~al.}}
\def\eg{{e.g.}}
\def\Tee{{\texttt{T}}}
\def\Plus{{\texttt{+}}}
\def\Ell{{\texttt{L}}}
\definecolor{blue}{rgb}{0.274,0.392,0.666}
\definecolor{ourred}{rgb}{1,0.3,0.3}
\definecolor{ourgreen}{rgb}{0,0.588,0.509}
\newcommand{\red}[1]{{#1\xspace}}
\newcommand{\remove}[1]{}
\DeclareMathOperator{\pert}{pert}
\DeclareMathOperator{\skel}{skel}
\let\doendproof\endproof
\renewcommand\endproof{\hfill $\qed$\doendproof}
\let\doendsketch\endsketch
\renewcommand\endsketch{\hfill $\qed$\doendsketch}
\begin{document}

\maketitle
\begin{abstract}
We study planar drawings of directed graphs in the L-drawing standard. We provide necessary conditions for the existence of these drawings and show that testing for the existence of a planar L-drawing is an \NP-complete problem. Motivated by this result, we focus on upward-planar L-drawings.
We show that directed \st-graphs
admitting an upward- (resp.\ upward-rightward-) planar L-drawing are exactly those admitting a bitonic (resp.\ monotonically increasing) \st-ordering. 
We give a linear-time algorithm that computes a bitonic (resp.\ monotonically increasing) \st-ordering of a planar \st-graph or reports that there exists none. 
\end{abstract}

\section{Introduction}
In an \emph{L-drawing} of a directed graph each vertex~$v$ is assigned a point in the plane with exclusive integer $x$- and $y$-coordinates, and
 each directed edge $(u,v)$ consists of a vertical segment exiting~$u$ and of a horizontal segment entering~$v$~\cite{angelini_etal:sofsem16}. 
The drawings of two edges may cross and partially overlap, 
following the model of~\cite{kt-ood-11}.
The ambiguity among crossings and bends is resolved by replacing bends with small rounded junctions.
An L-drawing in which edges possibly overlap, but do not cross, is a {\em planar L-drawing};
see, e.g., Fig.~\ref{fig:example}\red{b}.
A planar L-drawing is {\em upward planar} if its edges are $y$-monotone, and it is {\em upward-rightward planar} if its edges are simultaneously $x$-monotone and $y$-monotone. 
\begin{figure}[tb]
  \centering
  \begin{tabular}[b]{@{}c@{~~}c@{~~}c@{~~}c@{~~}c@{}}
    (a)\hspace{-2ex}\includegraphics[page=4,width=.2\textwidth]{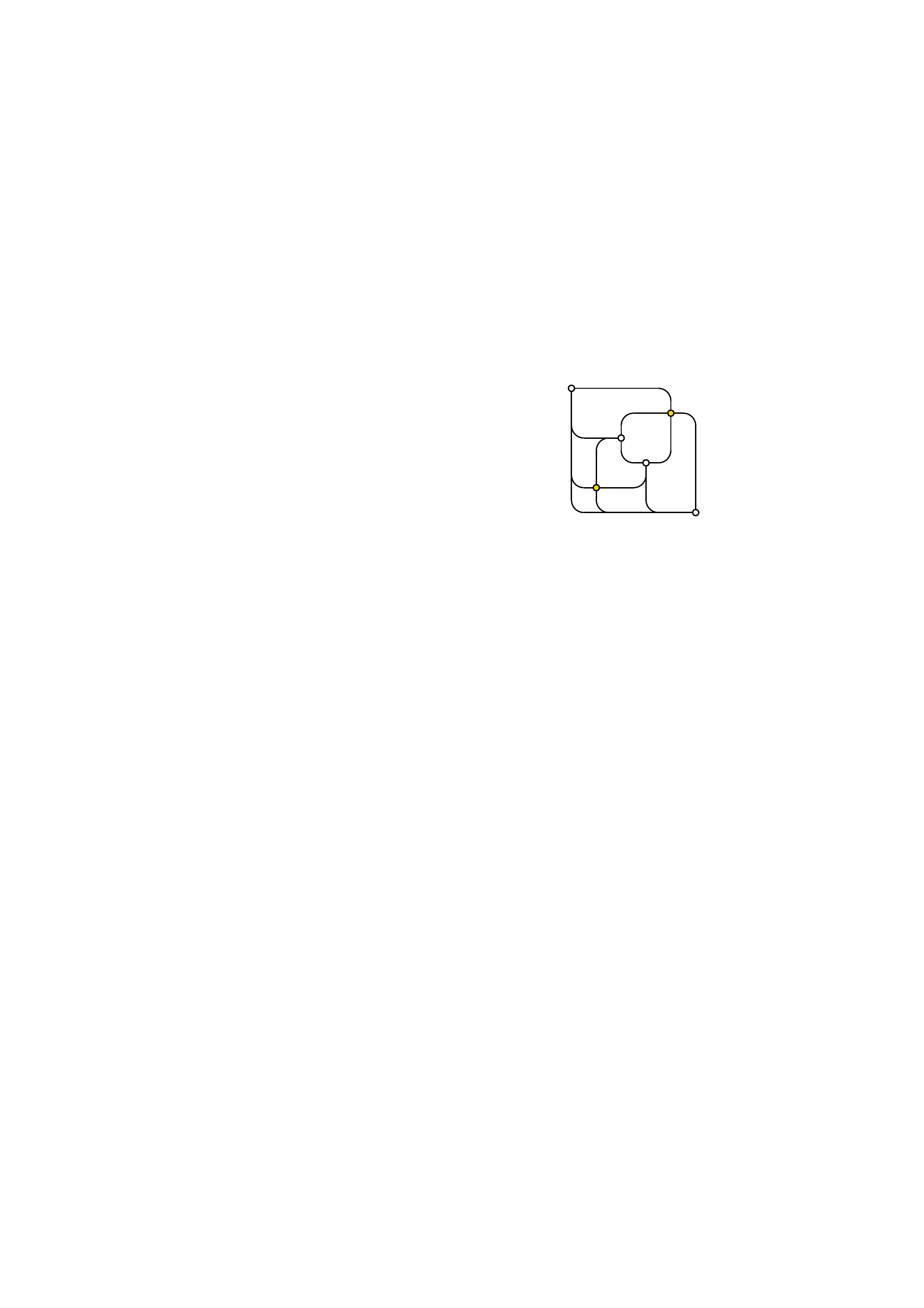} &
    (b)~{\raisebox{7.5em}{\includegraphics[page=5,width=.2\textwidth,angle=180]{octahedron}}} &
    (c)~{\raisebox{7.5em}{\includegraphics[page=6,width=.2\textwidth,angle=180]{octahedron}}} &
    (d)~\includegraphics[page=3,width=.2\textwidth]{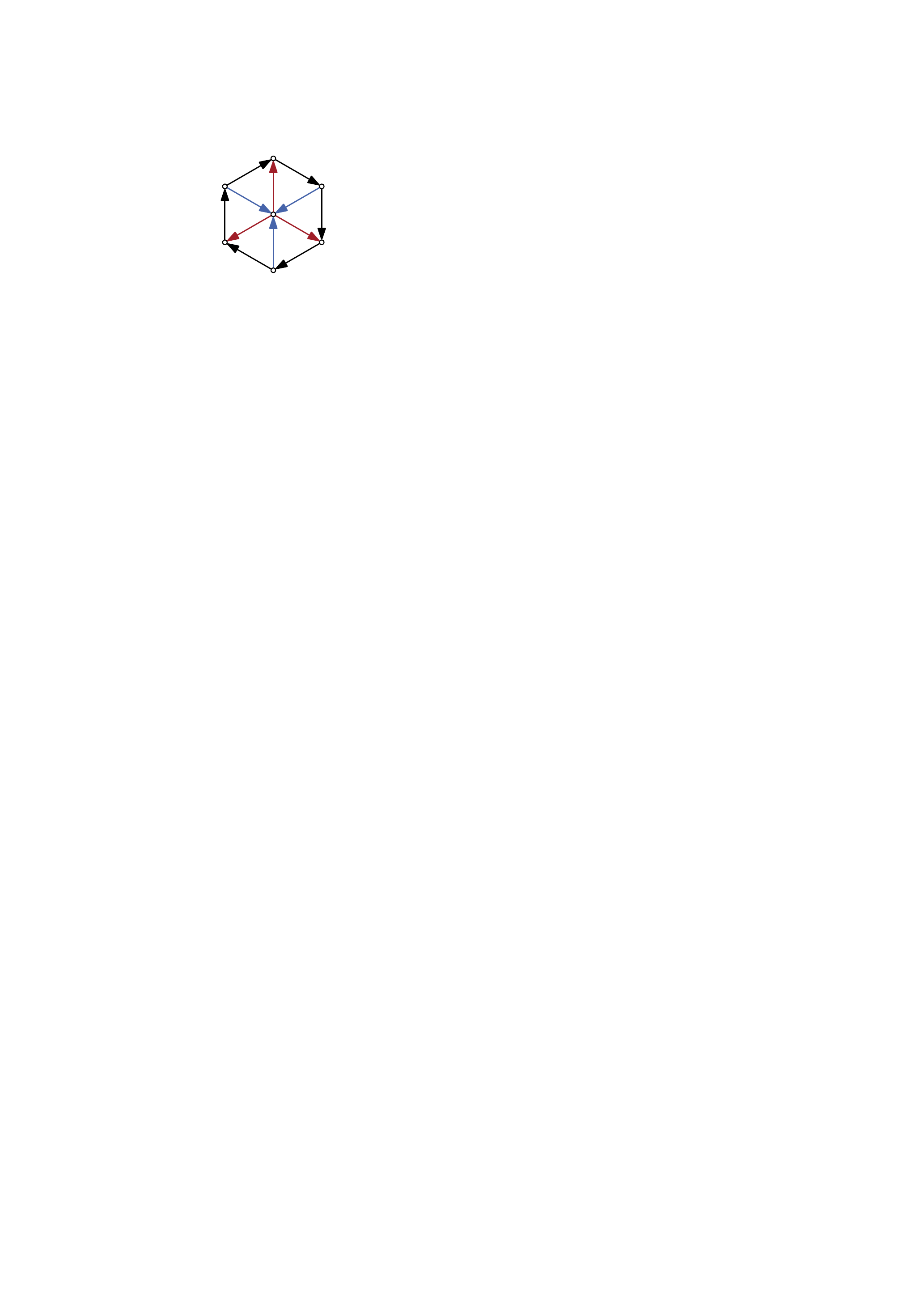}
  \end{tabular}

  \caption{%
    (a) A bitonic st-orientation of the octahedron that admits an
    upward planar L-drawing~(b). (c)~The corresponding drawing in the
    Kandinsky model.
    (d) An upward planar \st-graph
    $U$ that does not admit an upward-planar L-drawing 
  }
  \label{fig:example}
\end{figure}

Planar L-drawings correspond to drawings in the
Kandinsky model~\cite{foessmeier/kaufmann:gd95} with exactly one bend
per edge and with some restrictions on the angles around each vertex;
see Fig.~\ref{fig:example}\red{c}.
It is
\NP-complete~\cite{blaesius_etal:esa14} to decide
whether a multigraph has a planar embedding that allows a Kandinsky
drawing with at most one bend per edge~\cite{brueckner:ba13}. 
On the other hand, every simple planar graph has a Kandinsky drawing with at most one bend per edge~\cite{brueckner:ba13}. 
Bend-minimization in the Kandinsky-model is
\NP-complete~\cite{blaesius_etal:esa14} even if a
planar embedding is given, but can be approximated by a factor of
two~\cite{eigelsperger:phd,barth_etal:gd06}.
Heuristics for drawings in the Kandinsky model with empty faces and
few bends have been discussed by Bekos et al.~\cite{bekos_etal:sea15}.

Bitonic \st-orderings were introduced by
Gronemann for undirected planar graphs~\cite{gronemann:gd14} as an
alternative to canonical orderings. They were recently extended to
directed plane graphs~\cite{g-bsupg-GD16}. In a bitonic \st-ordering
the successors of any vertex must form an increasing and then a
decreasing sequence in the given embedding.
More precisely,
a {\em planar \st-graph} is a directed acyclic graph with a single
source $s$ and a single sink $t$ that admits a planar embedding in
which $s$ and $t$ lie on the boundary of the same face. A planar \st-graph always admits an upward-planar straight-line drawing~\cite{DT-aprad-88}.
An \emph{\st-ordering} of a planar \st-graph is an enumeration $\pi$
of the vertices with distinct integers, such that $\pi(u) < \pi(v)$
for every edge $(u,v) \in E$.  Given a {\em plane \st-graph}, i.e., a
planar \st-graph with a fixed upward-planar embedding $\cal E$,
consider the list $S(v)=\left< v_1,v_2,\dots,v_k \right>$ of
successors of $v$ in the left-to-right order in which they appear
around $v$.
The list $S(v)$ is \emph{monotonically decreasing} with respect to an \st-ordering
$\pi$ if $\pi(v_i)>\pi(v_{i+1})$ for $i=1,\dots,k-1$. It is {\em bitonic}
with respect to $\pi$ if there is a vertex $v_h$ in $S(v)$ such that
$\pi(v_i)<\pi(v_{i+1})$, $i=1,\dots,h-1$ and $\pi(v_i)>\pi(v_{i+1})$,
$i=h,\dots,k-1$.
For an upward-planar embedding $\cal E$, an \st-ordering $\pi$ is {\em bitonic} or {\em monotonically decreasing}, respectively if the successor list of each vertex is bitonic or monotonically decreasing, respectively.
Here, $\left< {\cal E}, \pi \right>$ is called a {\em bitonic pair} or {\em monotonically decreasing pair}, respectively, of~$G$. 
 
Gronemann used bitonic \st-orderings to 
obtain on the one hand upward-planar
polyline grid drawings in quadratic area with at most $|V|-3$ bends in
total~\cite{g-bsupg-GD16}  and on the other hand contact representations with upside-down oriented T-shapes~\cite{gronemann2015algorithms}.
A bitonic \st-ordering for biconnected undirected planar graphs can be computed in linear time~\cite{gronemann:gd14} and the existence of a bitonic \st-ordering for plane (directed) \st-graphs can also be decided in linear time~\cite{g-bsupg-GD16}. However, in the variable embedding scenario no algorithm is known to decide whether an \st-graph $G$ admits a bitonic pair.
Bitonic \st-orderings turn out to be strongly related to upward-planar L-drawings of \st-graphs. In fact, the $y$-coordinates of an upward-planar L-drawing yield a bitonic \st-ordering. 

In this work, we initiate the investigation of planar and upward-planar L-drawings. 
In particular, our contributions are as follows.
\begin{inparaenum}[(i)]
\item We prove that deciding whether a directed planar graph admits a planar L-drawing is \NP-complete.
 \item We characterize the planar \st-graphs admitting an upward
   (upward-rightward, resp.) planar L-drawing as the \st-graphs
   admitting a bitonic (monotonic decreasing, resp.) \st-ordering.  
\item We provide a linear-time algorithm to compute an embedding, if any, of a planar \st-graph that allows for a bitonic \st-ordering. This result complements the analogous algorithm proposed by Gronemann for undirected graphs~\cite{gronemann:gd14} and extends the algorithm proposed by Gronemann for planar \st-graphs in the fixed embedding setting~\cite{g-bsupg-GD16}.
\item
Finally, we show how to decide efficiently whether there is a planar L-drawing for a plane directed graph with a fixed assignment of the edges to the four ports of the vertices. 
\end{inparaenum}

\springerarxiv{Due to space limitations, full proofs are provided in~\cite{cccdnptw-plddg-tr-17}.}{Due to space limitations, full proofs are provided in~\ref{apx:proofs}.}

\section{Preliminaries}

We assume familiarity with basic graph drawing concepts and in particular with the notions of connectivity and SPQR-trees \springerarxiv{(see also~\cite{dt-ogasp-90,cccdnptw-plddg-tr-17}).}{(see also~\cite{dt-ogasp-90} and~\ref{apx:spqr}).}  

A (simple, finite) directed graph $G=(V,E)$ consists of a finite set $V$ of vertices
and a finite set $E \subseteq \{(u,v) \in V \times V; u \neq v\}$ of ordered pairs of vertices. If $(u,v)$ is an edge then $v$ is a
\emph{successor} of $u$ and $u$ is a \emph{predecessor} of
$v$. 
%
A graph is \emph{planar} if it admits a 
drawing in the plane without edge crossings.
A \emph{plane graph} is a planar graph with a fixed \emph{planar embedding}, i.e., with fixed circular orderings of the edges incident to each vertex---determined by a planar drawing---and with a fixed outer face.

Given a planar embedding and a vertex $v$, a pair of consecutive edges incident to $v$ is {\em alternating} if they are not both incoming or both outgoing. We say that $v$ is \emph{$k$-modal} if there exist exactly $k$ alternating pairs of edges in the cyclic order around $v$.
%
An embedding of a directed graph $G$ is  \emph{$k$-modal}, 
if each vertex is at most $k$-modal.
A $2$-modal embedding is also called \emph{bimodal}. An upward-planar drawing determines a bimodal embedding. However, the existence of a bimodal embedding is not a sufficient condition for the existence of an upward-planar drawing. Deciding whether a directed graph admits an upward-planar (straight-line) drawing is an \NP-hard problem~\cite{GargTamassia01}.

\remove{
\paragraph{\bf Geometric Representations}
\todo{Sab: I still do not see the point with the T-contact representations. Here there seems only to be the less interesting case of undirected graphs. Is it possible to come to the point of the directed version using only a few lines---in total?}
It is well known that every plane graph can be represented by a collection of interiorly-disjoint non-crossing arc connected sets in the plane. 
Here, vertices are bijectively mapped to the sets such that two vertices are adjacent if and only if the sets intersect, e.g., the classical example being the \emph{kissing coins} of Koebe~\cite{Koebe}. 
Such a collection is called a \emph{contact representation} of a graph $G$. 
The main relevant result here involves \Tee-shapes, where a \Tee-shape $T$ consists of a vertical line segment $T^v$ and a horizontal line segment $T^h$ such that the bottom point of $T^v$ is a point on $T^h$. 
Namely, every plane graph has a contact representation by \Tee-shapes~\cite{FraysseixMR94}. Moreover,  a \Tee-contact representation easily provides a planar L-drawing as follows. Simply place each vertex $u$ at the point $p_u$ in its \Tee-shape $T_u$ which is shared between $T^v_u$ and $T^h_u$. Now, each edge $uu'$ is formed by the unique path between $p_u$ to $p_u'$ in $T_u \cup T_{u'}$. This forms an L-drawing of the graph in which no vertex uses the bottom port, \eg, see
Fig.~\ref{fig:example}(d).  Thus, every plane graph can be oriented to
have an \emph{upward planar L-drawing}\todo{StC: where will upward
  planar L-drawing be defined?}.
In the conclusions we revisit this connection by discussing a notion of \emph{directed} contact representations involving \Plus, \Tee, and \Ell-shapes which correspond to (restricted) planar L-drawings of directed plane graphs, and related questions.
}

\paragraph{\bf L-drawings.}
\begin{wrapfigure}[6]{hR}{.2\textwidth}
    \centering
    \vspace{-2.1em}
    \includegraphics[page=4,width=.2\textwidth]{6modal}
\end{wrapfigure}
%

A planar L-drawing determines a $4$-modal
embedding. This implies that there exist planar directed graphs that
do not admit planar L-drawings. 
A $6$-wheel whose central vertex is incident to alternating incoming and outgoing edges is an example of a graph
that does not admit any $4$-modal embedding, and therefore
any planar L-drawing.

On the other hand, the existence of a $4$-modal embedding is not
sufficient for the existence of a planar L-drawing.
 E.g., the
octahedron depicted in the figure on the right 
does not admit a planar L-drawing.
 Since the octahedron is triconnected, it
admits a unique combinatorial embedding (up to a flip). Each vertex is
$4$-modal. However, the rightmost vertex in a planar L-drawing 
must be $1$-modal or $2$-modal.

Any upward-planar L-drawing of an \st-graph $G$ can be modified to obtain an upward-planar drawing of $G$: 
Redraw each edge as a $y$-monotone curve arbitrarily close to the drawing of the corresponding $1$-bend orthogonal polyline while avoiding 
crossings and edge-edge overlaps. However, not every upward-planar graph admits an upward-planar L-drawing. E.g., the graph in Fig.~\ref{fig:example}\red{d} contains a subgraph that does not admit a bitonic \st-ordering~\cite{g-bsupg-GD16}.
In Section~\ref{se:fixed-embedding} (Theorem~\ref{th:characterization}), we show that this means it does not admit an upward planar L-drawing. 

\paragraph{\bf{The Kandinsky Model.}}

In the Kandinsky model~\cite{foessmeier/kaufmann:gd95}, vertices are
drawn as squares of equal sizes on a grid and edges---usually undirected---are drawn as
orthogonal polylines on a finer grid; see Fig.~\ref{fig:example}\red{c}.
Two consecutive edges in the clockwise order around a vertex
define a face and an angle in $\{0, \pi/2, \pi, 3\pi/2, 2\pi\}$ in
that face.  In order to avoid edges running through other vertices,
the Kandinsky model requires the so called \emph{bend-or-end
  property}: There is an assignment of bends to vertices with the
following three properties. (a) Each bend is assigned to at most one
vertex. (b) A bend may only be assigned to a vertex to which it is
connected by a segment (i.e., it must be the first bend on an
edge). (c) If $e_1,e_2$ are two consecutive edges in the clockwise
order around a vertex $v$ that form a 0 angle inside face $f$, then
a bend of $e_1$ or $e_2$ forming a $3\pi/2$ angle inside $f$ must be assigned to $v$. Further, the
Kandinsky model requires that there are no \emph{empty faces}.

Given a planar L-drawing, consider a vertex~$v$ and all edges incident
to one of the four ports of~$v$.  By assigning to~$v$ all bends on
these edges---except the bend furthest from~$v$---we satisfy the
bend-or-end property.  This implies the following lemma, which is
proven in~\springerarxiv{\cite{cccdnptw-plddg-tr-17}.}{\ref{apx:proofs}.}
\begin{lemma}\label{LEMMA:kandinsky}
  A graph has a planar L-drawing if and only if it admits a drawing in
  the Kandinsky model with the following properties:
  \begin{inparaenum}[(i)]
  \item
    Each edge bends exactly once;
  \item
    at each vertex, the angle between any two outgoing (or between any two incoming) edges is 0 or $\pi$; and 
  \item
    at each vertex, the angle between any incoming edge and any outgoing edge is $\pi/2$ or~$3\pi/2$.
  \end{inparaenum}
\end{lemma}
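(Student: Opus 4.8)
The plan is to prove both implications by matching, edge by edge and vertex by vertex, the geometric L-shape of an edge with the port-and-angle data of the Kandinsky model. The conceptual point is that conditions~(ii) and~(iii) encode exactly the L-drawing convention ``outgoing edges leave vertically, incoming edges enter horizontally'', while condition~(i) encodes the single corner of an L.

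For the direction from L-drawings to Kandinsky drawings, I would draw each vertex as a small axis-aligned box centered at its assigned point and keep each edge as its vertical-then-horizontal polyline, rounding the junction to a single bend. Every edge then has exactly one bend, giving~(i). By definition of an L-drawing an edge leaves its source through a top or bottom port and enters its target through a left or right port; hence two edges outgoing at a common vertex both emanate vertically and form an angle of $0$ or $\pi$, which is~(ii), whereas an incoming and an outgoing edge emanate horizontally and vertically and form an angle of $\pi/2$ or $3\pi/2$, which is~(iii). To finish this direction I would verify the bend-or-end property using the assignment indicated in the text---at each port of $v$ assign to $v$ every bend on an edge at that port except the one furthest from $v$---checking properties~(a)--(c) port by port, and I would argue that no face is empty since each vertex box has positive area and every thin region produced by bundling overlapping collinear segments is incident to a box corner.

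For the converse I would first show that~(ii) and~(iii) force the L-drawing port structure. Condition~(ii) forbids two outgoing edges from using perpendicular (adjacent) ports, so all outgoing edges at a vertex use at most two opposite ports, and symmetrically for incoming edges; condition~(iii) then makes the incoming ports perpendicular to the outgoing ports. Because each edge has a single bend by~(i), a vertical first segment is always followed by a horizontal second segment, so if some vertex sends its outgoing edges vertically then each of these edges enters its target horizontally; propagating this along the edges of a connected component shows that, after a global $90^\circ$ rotation of each component if needed, every outgoing edge leaves vertically and every incoming edge enters horizontally. Each edge is thus an L-shape with its bend at $(x_u,y_v)$, and reading off the vertex positions yields an L-drawing. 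I would complete the argument by scaling and perturbing the coordinates so that all $x$- and all $y$-values become pairwise distinct integers, breaking ties consistently with the embedding and noting that a sufficiently small perturbation neither creates crossings nor destroys the L-shape of any edge.

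The step I expect to be the main obstacle is the treatment of overlapping edges in the forward direction: L-drawings permit collinear overlaps whereas the Kandinsky model forbids empty faces, so the passage to the fine grid must bundle the overlapping segments and certify that the resulting slivers are genuine, vertex-incident faces rather than degenerate ones. The remaining correspondence between the angle conditions and the vertical-out/horizontal-in port assignment is essentially forced and should be routine.
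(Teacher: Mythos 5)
Your proposal is correct and follows essentially the same route as the paper's proof: in the forward direction, the per-port ``assign all bends except the furthest'' rule to satisfy the bend-or-end property, and in the converse, reading off the forced port structure from conditions (i)--(iii), bundling edges back onto the coarse grid, and perturbing coordinates to make them distinct. The extra details you supply---the orientation-propagation argument across a connected component and the vertex-incidence of the thin faces created by overlapping segments---are precisely the points the paper compresses into ``or a rotation of $\pi/2$ of it'' and ``since all vertices have distinct $x$- and $y$-coordinates, there are no empty faces.''
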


\section{General Planar L-Drawings}
\label{se:variable}

We consider the problem of deciding whether a graph admits a planar L-drawing. 
In Section~\ref{sse:undirected}, we show that the problem is
\NP-complete if no planar embedding is given. In the fixed embedding setting (Section~\ref{SEC:ports}) the problem can be described as an ILP. It is solvable in linear time if we also fix the ports.

\subsection{Variable Embedding Setting}
\label{sse:undirected}

As a central building block for our hardness reduction we use a directed graph $W$ that can be constructed starting from a $4$-wheel with central vertex $c$ and rim $(u,v,w,z)$. We orient the edges of $W$ so that $v$ and $z$ (the \emph{V-ports} of $W$) are sinks and $u$ and $w$ (the \emph{H-ports} of $W$) are sources. Finally, we add directed edges $(v,c)$, $(z,c)$, $(c,w)$, and $(c,u)$; see Fig.~\ref{fig:4wheel}. 
We now provide Lemma~\ref{cl:rect} which describes the key property of planar L-drawings of $W$.

\begin{restatable}{lemma}{lemmawheel}
\label{cl:rect}
  In any planar L-drawing of $W$ with cycle $(u,v,w,z)$ as the outer
  face the edges of the outer face form a rectangle (that contains
    vertex $c$).
\end{restatable}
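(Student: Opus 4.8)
The plan is to pin the drawing down from the ports that the L-drawing standard forces on each vertex. In any L-drawing every outgoing edge leaves its tail through a vertical (top/bottom) port and every incoming edge reaches its head through a horizontal (left/right) port; this is precisely the port discipline underlying Lemma~\ref{LEMMA:kandinsky}. Writing $u=(x_u,y_u)$ and analogously for the other vertices, the four rim edges $(u,v)$, $(w,v)$, $(w,z)$, $(u,z)$ therefore bend at the points $(x_u,y_v)$, $(x_w,y_v)$, $(x_w,y_z)$, $(x_u,y_z)$. These are exactly the four corners of the axis-aligned rectangle $R$ whose vertical sides lie at $x_u,x_w$ and whose horizontal sides lie at $y_v,y_z$. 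So the four bends already are the corners of a rectangle; what remains is to show that the four vertices lie on the sides of $R$ (and not outside it), and that $c$ lies inside.

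The central step is to argue that the two rim edges at each rim vertex leave through \emph{opposite} ports, so that the vertex sits in the relative interior of a side of $R$ rather than producing a spike. Here I would use the rotation system of the wheel: since the outer face is the $4$-cycle $(u,v,w,z)$, at each rim vertex the two rim edges are exactly the two edges on the outer boundary, and the unique remaining edge---the spoke to $c$---is embedded rotationally between them, on the interior side. At the V-ports $v$ and $z$ this spoke is outgoing and hence occupies a vertical port, while its two incident rim edges occupy horizontal ports; at the H-ports $u$ and $w$ the spoke is incoming and occupies a horizontal port while its two rim edges occupy vertical ports. In either case the spoke uses a port of the opposite orientation to the two rim edges, and a single edge of the opposite orientation can be rotationally between two parallel-port edges only if those two rim edges use opposite ports (one left and one right at $v,z$; one top and one bottom at $u,w$). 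Consequently $v$ and $z$ lie in the interiors of the two horizontal sides of $R$, and $u$ and $w$ lie in the interiors of the two vertical sides.

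With all four vertices straddling, the four L-shaped edges cover the four sides of $R$ and meet only at the vertices and bends, so their union is exactly the simple closed curve $\partial R$; in particular no two of them overlap. Finally, each spoke leaves its rim vertex toward the interior side of $R$, so $c$ and the four spokes lie in the bounded region enclosed by $\partial R$; equivalently, since the outer face is the unbounded region outside $\partial R$, everything else lies inside. I expect the only genuine obstacle to be the middle step: because the L-drawing model permits edges to overlap, one must explicitly exclude the ``same-port'' configurations in which both rim edges at a vertex run in the same direction and overlap. The rotation-system argument above is exactly what rules these out, since such a configuration would leave no room for the spoke between the two rim edges and would therefore force $c$ into the outer face.
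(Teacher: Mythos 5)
Your overall strategy is the same as the paper's: rule out the ``same-port'' (overlapping) configuration at each rim vertex, then conclude that the rim is a rectangle with $c$ and the spokes inside. Your opening observation---that the four bends automatically sit at the points $(x_u,y_v)$, $(x_w,y_v)$, $(x_w,y_z)$, $(x_u,y_z)$, the corners of an axis-aligned rectangle $R$---is correct and is a cleaner setup than the paper's convex-corner counting. (A minor misreading: in $W$ each rim vertex is joined to $c$ by \emph{two} antiparallel spokes, not one; this is harmless, since your argument only uses the spoke whose orientation is opposite to that of the rim edges.)

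The genuine gap is exactly in the step you flag as the crux. The claim ``a single edge of the opposite orientation can be rotationally between two parallel-port edges only if those two rim edges use opposite ports'' is false as a statement about rotation systems. If both rim edges at $v$ enter through the left port, the outgoing spoke $(v,c)$, sitting at the top or bottom port, \emph{is} rotationally between them, namely in the arc that sweeps past the other three ports; in that rotation the outer-face corner at $v$ is the degenerate zero-angle slit between the two overlapping rim edges, all spokes lie on the $2\pi$ side with legal ports, and no contradiction arises from the rotation at $v$ alone. What actually has to be shown---and what the paper's proof asserts---is the geometric fact that in the same-port case the \emph{unbounded} face wraps around the protruding spike tip with angle $2\pi$, so that the slit between the overlapping rim edges leads into the bounded region; only then are the spokes confined to the slit, forced onto the same port as the rim edges, contradicting that $(v,c)$ (resp.\ $(c,u)$) needs a port of the other orientation. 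Establishing this requires planarity of the whole drawing, not just the local rotation: for example, one can first argue that $c$ must lie inside $R$ (otherwise the interior of $R$ would be a bounded face whose boundary contains no spoke, which the embedding of $W$ does not have), and then that a spoke joining a spike tip, which lies outside $R$, to $c$ inside $R$ would have to cross a rim edge. Without an argument of this kind, your proof assumes at its central step precisely what is to be proved.
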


  \begin{figure}[tb]
    \centering
    \includegraphics[page=2]{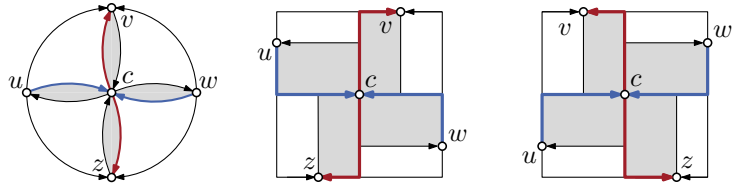}
    \caption{$4$-wheel graph $W$ and two planar L-drawings of $W$.}
    \label{fig:4wheel}
  \end{figure}

We are now ready to give the main result of the section. 

\begin{restatable}{theorem}{theoremhardness}\label{th:hardness}
  It is NP-complete to decide whether a directed graph admits a planar
  L-drawing.
\end{restatable}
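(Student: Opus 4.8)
The plan is to establish NP-hardness by a reduction from a known NP-complete problem, while membership in NP follows immediately: given a candidate planar L-drawing, one can verify in polynomial time that every edge consists of a vertical segment exiting its tail and a horizontal segment entering its head, that vertices have distinct integer coordinates, and that no two edges cross. So the bulk of the work is the hardness reduction. Since the gadget $W$ from Lemma~\ref{cl:rect} forces the four outer edges to form a rectangle enclosing its center, the natural strategy is to use copies of $W$ as rigid rectangular ``tiles'' whose H-ports and V-ports can only connect in geometrically constrained ways. I would reduce from a planarity- or orientation-flavored NP-complete problem whose combinatorial choices can be encoded by the two possible shapes a planar L-drawing of $W$ may take (the figure shows exactly two drawings), most plausibly a variant such as \textsc{Planar 3-SAT} or an orthogonal/Kandinsky-style problem like the \NP-complete Kandinsky one-bend problem cited earlier~\cite{blaesius_etal:esa14,brueckner:ba13}.

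First I would fix the source problem and design variable gadgets and clause gadgets out of interconnected copies of~$W$. The key is that Lemma~\ref{cl:rect} pins down the outer face of each copy to a rectangle, so that the relative placement of a copy's H-ports (the sources $u,w$) and V-ports (the sinks $v,z$) is forced up to the discrete choices visible in the two admissible drawings of~$W$. I would exploit this discreteness to represent a Boolean value (or an orientation/rotation choice) by \emph{which} of the two rectangle realizations a given copy adopts, and then chain copies so that a port of one tile must align with a compatible port of an adjacent tile. The horizontal-versus-vertical asymmetry of L-drawings (outgoing edges leave vertically, incoming edges enter horizontally) is what makes the port-matching rigid: an H-port can only be satisfied by a horizontally arriving edge and a V-port only by a vertically arriving one, so incompatible adjacencies simply cannot be drawn without a crossing.

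Next I would assemble these gadgets along the faces of a planar embedding of the source instance, using the $4$-modality obstruction noted earlier as an additional enforcement mechanism: recall that a planar L-drawing forces a $4$-modal embedding, and that the rightmost vertex of any planar L-drawing must be $1$- or $2$-modal, which already rules out certain local configurations (as in the octahedron example). I would use such parity/modality constraints to forbid ``cheating'' configurations and to guarantee that a consistent truth assignment (or orientation) exists if and only if all gadgets can be drawn simultaneously without crossings. The soundness direction---a satisfying assignment yields a planar L-drawing---is obtained by placing each tile in the rectangle shape dictated by its assigned value and routing the connecting edges along the grid; the completeness direction---any planar L-drawing induces a satisfying assignment---reads off the value of each variable from the realized shape of its tile and uses Lemma~\ref{cl:rect} to argue the shapes are globally consistent.

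The main obstacle I anticipate is the \emph{geometric} rigidity of the composition: Lemma~\ref{cl:rect} controls a single copy of~$W$ in isolation (with a prescribed outer face), but in the full construction the tiles are nested and share edges, so I must prove that the rectangle-forcing property still holds when many copies interact and that no global deformation lets an unintended shape sneak through. Concretely, the difficulty is showing that the only planar L-drawings of the assembled gadget graph are exactly the ones corresponding to satisfying assignments---ruling out exotic routings that reuse overlaps (which the L-drawing model permits) to bypass the intended constraints. Handling these overlaps carefully, and arguing that the port-assignment at shared vertices stays $4$-modal throughout, is where the bulk of the technical effort will lie.
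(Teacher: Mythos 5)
Your NP-membership argument and your instinct to use the rigidity of $W$ (Lemma~\ref{cl:rect}) as the engine of the reduction are both on target, but the proposal has a genuine gap: the reduction itself is never instantiated. You hedge between \textsc{Planar 3-SAT} and the one-bend Kandinsky problem as the source, and neither choice is carried through. For \textsc{Planar 3-SAT} you would need concrete variable, wire, and clause gadgets built from copies of $W$, plus a proof that their planar L-drawings are exactly the satisfying configurations; none of this is designed, and you yourself flag the hardest part (ruling out ``exotic'' drawings of the assembled graph) as open. A reduction \emph{from} the Kandinsky one-bend problem is also not immediate: by Lemma~\ref{LEMMA:kandinsky}, planar L-drawings correspond to one-bend Kandinsky drawings with \emph{additional} angle and orientation constraints, so a yes-instance of the Kandinsky problem need not yield a planar L-drawing; bridging that gap would itself require gadgets you have not given. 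There is also a misreading of the gadget's role: you propose to encode a Boolean value in ``which of the two rectangle realizations'' a copy of $W$ adopts, but $W$ does not carry a truth value in any natural way --- its two drawings in Fig.~\ref{fig:4wheel} differ only in internal routing, and what Lemma~\ref{cl:rect} actually provides is a rigid rectangular \emph{vertex} whose two sinks (V-ports) and two sources (H-ports) must be met vertically and horizontally, respectively. Likewise, the modality observations (4-modality, the rightmost-vertex condition) are preliminaries in the paper and play no enforcement role in its hardness proof.

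For comparison, the paper reduces from HV-rectilinear planarity testing~\cite{dlp-chvrp-14}: each vertex of the biconnected degree-4 instance $G$ becomes a copy of $W$, each V-edge becomes an oriented edge gadget attached to V-ports, and each H-edge a rotated variant attached to H-ports, so the H/V labels translate directly into the horizontal/vertical port asymmetry you noticed --- no clause gadgets are needed at all. The nontrivial direction (planar L-drawing $\Rightarrow$ HV-drawing) is handled by tracing, inside the gadgets, an orthogonal path for each edge of $G$ that is an 8-bend staircase of total rotation zero, and then invoking Tamassia's flow model~\cite{t-eggmn-87} (following Br\"uckner~\cite{brueckner:ba13}) to replace each such path by a straight segment. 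This rotation-zero/flow argument is precisely the missing tool that lets one go from ``gadgets are rigid in isolation'' to ``the global drawing induces a valid solution of the source instance,'' which is the step your plan leaves unresolved.
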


\begin{sketch}
We reduce from the \NP-complete problem of HV-rectilinear planarity testing~\cite{dlp-chvrp-14}. In this problem, the input is a biconnected degree-$4$ planar graph~$G$ with edges labeled either H or V, and the goal is to decide whether~$G$ admits an \emph{HV-drawing}, i.e., a planar drawing such that each H-edge (V-edge) is drawn as a horizontal (vertical) segment.
Starting from $G$, we construct a graph $G'$ by replacing: (i) vertices with $4$-wheels as in Fig.~\ref{fig:4wheel}; (ii) V-edges with the gadget shown in Fig.~\ref{fig:V-edge}; and (iii) H-edges with an appropriately rotated and re-oriented version of the V-edge gadget. 
If $(u,v)$ is a V-edge, the two vertices labeled $u$ and $v$ of its gadget are identified with a V-port of the respective vertex gadgets. Otherwise, they are identified with an H-port. Figure~\ref{fig:HV-to-L} shows a vertex gadget with four incident edges.
%
The proof that $G'$ and $G$ are equivalent is somewhat similar to Br\"uckner's hardness proof in~\cite[Theorem~3]{brueckner:ba13} and exploits \autoref{cl:rect}. Refer to~\springerarxiv{\cite{cccdnptw-plddg-tr-17}}{\ref{apx:proofs}} for the full details.
\end{sketch}

  \begin{figure}[tbp]
    \centering
  \begin{subfigure}[b]{.4\textwidth}
    \centering
    \includegraphics[scale=1,page=1,height=.7\textwidth]{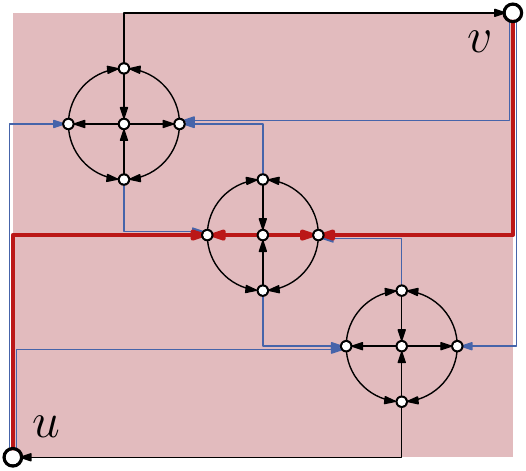}
    \caption{}
    \label{fig:V-edge}
  \end{subfigure}
\hfil
  \begin{subfigure}[b]{.4\textwidth}
        \centering
  \includegraphics[scale=1,page=6,height=.7\textwidth]{V-gadget}
  \caption{}
    \label{fig:HV-to-L}
  \end{subfigure}
  \caption{(a) Edge gadget for a V-edge. (b) Connections among gadgets.}
  \end{figure}

\newcommand{\proofoftheoremhardness}{
  We reduce from HV-rectilinear planarity testing, which is \NP-hard
  even for biconnected graphs~\cite{dlp-chvrp-14}.  An instance of
  this problem is a degree-$4$ planar graph~$G$ where each edge is labeled
  either H or V.  The task is to decide whether~$G$ admits a planar
  orthogonal drawing (without bends) such that H-edges are drawn
  horizontally and V-edges are drawn vertically. We call such a drawing a planar \emph{HV-drawing}.
  
  Given a biconnected
  HV-graph~$G$, we construct an instance $G'$ of planar L-drawing by
  replacing each vertex by a $4$-wheel as in Fig.~\ref{fig:4wheel}, each edge labeled V (\emph{V-edge}) with the gadget shown in
  Fig.~\ref{fig:V-edge} and each edge labeled H (\emph{H-edge}) with the gadget shown in
  Fig.~\ref{fig:H-edge}. For a V-edge $(u,v)$, the two vertices of the edge gadget labeled $u$ and $v$ are identified with a V-port of the respective  vertex gadgets and for an H-edge with an H-port of the vertex gadgets.
  Obviously, this reduction is polynomial in the size of $G$. 
  
  Our high-level construction is somewhat similar to Br\"uckner's \NP-complete\-ness proof for \textsc{1-Embeddability} in the Kandinsky model~\cite[Theorem~3]{brueckner:ba13} in that we define gadgets that have a very limited flexibility in terms of their embeddings to realize horizontal and vertical edges. Yet the internals of the gadgets themselves and the reduction are quite different.

  We claim that $G'$ has a planar L-drawing if and only if $G$ has a planar HV-drawing. So first assume that $G'$ admits a planar L-drawing $\Gamma'$. We transform $\Gamma'$ into a planar HV-drawing. In a first step, we draw each vertex $v$ of $G$ at the position of the central vertex of the vertex gadget for $v$. Due to Lemma~\ref{cl:rect}, the edge gadgets attach to the bounding boxes of the vertex gadgets. Hence,
  for each edge $(u,v)$ of $G$, we can draw an orthogonal path from $u$ to $v$ by tracing the thick edges (red for a V-edge, blue for an H-edge) in its edge gadget and the two incident vertex gadgets  (see Fig.~\ref{fig:4wheel} and~\ref{fig:H-edge}).
  This intermediate drawing as a subdrawing of $\Gamma'$ is a planar orthogonal drawing of $G$, where each edge is an 8-bend orthogonal staircase path with total rotation of 0. Using Tamassia's network flow model for orthogonal graph drawings~\cite{t-eggmn-87}, we can argue that an edge with rotation 0 is equivalent to a rectilinear edge without bends. In fact, the flow corresponding to the eight bends is cyclic and can be reduced to a flow of value 0, which implies no bends. We refer to Br\"uckner~\cite[Lemma~7]{brueckner:ba13} for more details of this argument.

  Now, conversely, assume that $G$ admits a planar HV-drawing $\Gamma$. In order to show that $\Gamma$ can be transformed into an L-drawing of $G'$ we first ``thicken'' $\Gamma$ by inflating vertices at grid points to squares and edges to corresponding rectangles, see Fig.~\ref{fig:HV-to-L}. This can easily be done without introducing any crossings of overlapping features by refining the grid on which $\Gamma$ is drawn. Since each vertex gadget in $G'$ can be drawn in a square (Fig.~\ref{fig:4wheel}) and each edge gadget in a rectangle (Figs.~\ref{fig:V-edge} and~\ref{fig:H-edge}), we can insert their drawings into the thickened drawing of $G$ as illustrated in Fig.~\ref{fig:HV-to-L}. This produces an L-drawing of $G'$.
  
  To see that the problem is in \NP, we note that for an embedding of a graph and a given orthogonal representation (see Tamassia~\cite{t-eggmn-87}) of that embedding, one can check whether all edges are represented as valid L-shapes in polynomial time.
  }

\subsection{Fixed Embedding and Port Assignment}\label{SEC:ports}

In this section, we show how to decide efficiently whether there is a planar L-drawing for a plane directed graph with a fixed assignment of the edges to the four ports of the vertices.
Using Lemma~\ref{LEMMA:kandinsky} and the ILP formulation of Barth et
al.~\cite{barth_etal:gd06}, we first set up linear inequalities that
describe whether a plane $4$-modal graph has a planar L-drawing.  Using
these inequalities, we then transform our decision problem into a
matching problem that can be solved in linear time.

We call a vertex~$v$ an \emph{in/out-vertex} on a face~$f$ if $v$ is
incident to both, an incoming edge and an outgoing edge on $f$.
Let $x_{vf} \in \{0,1,2\}$ describe the angle in a face $f$ at a
vertex $v$: the angle between two outgoing or two incoming
edges is $x_{vf}\cdot \pi$ and the angle between an incoming and an outgoing
edge is $x_{vf}\cdot \pi + \pi/2$. Let $x_{f e}^v \in \{0,1\}$ be~1 if
there is a convex bend in face~$f$ on edge~$e$ assigned to a
vertex~$v$ to fulfill the bend-or-end property.  
%
There is a planar L-drawing with these parameters if and
only if the following four conditions are satisfied (see~\springerarxiv{\cite{cccdnptw-plddg-tr-17}}{\ref{apx:ILP}} for details):
\begin{inparaenum}[(1)]
\item The angles around a vertex $v$ sum to $2\pi$.
\item\label{ITEM:oneBend} Each edge has exactly one bend.
\item\label{ITEM:faceBend} The number of convex angles minus the
  number of concave angles is $4$ in each inner face and $-4$ in the
  outer face.
\item The bend-or-end property is fulfilled, i.e., for any two
  edges~$e_1$ and~$e_2$ that are consecutive around a vertex~$v$ and
  that are both incoming or both outgoing, and for the faces~$f_1$,
  $f$, and~$f_2$ that are separated by~$e_1$ and~$e_2$ (in the cyclic order
  around~$v$), it holds that $x_{vf} + x_{f_1 e_1}^v + x_{f_2e_2}^v \geq 1$.
\end{inparaenum}
\noindent
Let $e=(v,w)$ be incident to faces $f$ and $h$, Condition (2) implies $-x_{h e}^v - x_{he}^w   = x_{fe}^v  + x_{fe}^w-1$. 
\mbox{Hence, (3) yields}

\smallskip
$(3') \displaystyle
  \sum_{\hidewidth\substack{e=(v,w) \textup{ incident~to~}
      f}\hidewidth} (x_{fe}^v + x_{fe}^w) - \sum_{v \textup{ on } f}
  x_{vf} = \pm 2 + (\# \textup{ in/out-vertices on }f - \deg f)/2$.
\smallskip

Observe that the number of in/out-vertices on a face $f$ is odd if and
only if $\deg f$ is odd. Moreover, if we omit the bend-or-end
property, we can formulate the remaining conditions as an
uncapacitated network flow problem. \label{L-flow} The network has
three types of nodes: one for each vertex, face, and edge of the
graph.  It has two types of edges: from vertices to incident faces and
from faces to incident edges.  The supplies are
$\lceil\frac{4-k}{2}\rceil$ for the $k$-modal vertices,
$\pm 2 + 1/2\cdot ( \# \textup{in/out-vertices } - \deg f)$ for a face
$f$, and $-1$ for the edges.

\begin{theorem}\label{THEO:assigned_ports}
  Given a directed plane graph $G$ and labels
  $\text{out}(e)\in\{\textup{top},\textup{bottom}\}$ and
  $\text{in}(e)\in\{\textup{right},\textup{left}\}$ for each edge~$e$, 
  it can be decided
  in linear time whether $G$ admits a planar L-drawing in which each
  edge $e$ leaves its tail at out$(e)$ and enters its head at in$(e)$.
\end{theorem}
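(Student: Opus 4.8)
The plan is to reduce, via Lemma~\ref{LEMMA:kandinsky}, to the question of whether the given port assignment can be completed to a Kandinsky drawing satisfying properties (i)--(iii), and then to observe that fixing the ports removes almost all freedom. Concretely, I would first note that for an edge $e$ the pair $(\mathrm{out}(e),\mathrm{in}(e))$ fixes the turn direction of the unique bend of $e$: a vertical segment leaving the prescribed top/bottom port and a horizontal segment entering the prescribed left/right port can be joined by a single right-angle bend in exactly one way, so the incident face in which this bend is convex (the ``inside'' of the turn) is determined. Likewise, the cyclic order of the edges around each vertex together with the port labels determines every vertex angle $x_{vf}$ (it is $0$ within a port, and $0$, $\pi$, or $\pi/2+\mathrm{const}$ across occupied or empty port boundaries). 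Hence the only remaining degree of freedom in the flow model of Section~\ref{SEC:ports} is the bend-or-end assignment, i.e.\ to which of its two endpoints each bend is assigned.

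With this in hand, I would split the algorithm into a checking phase and a matching phase. In the checking phase I first verify that the port labels are consistent with the fixed rotation system (all edges sharing a port are consecutive, and the four ports appear in the cyclic pattern top, right, bottom, left), and then, since the $x_{vf}$ and the per-edge convexities are now constants, I verify the conditions that do not involve the bend-or-end assignment: that the angles sum to $2\pi$ at every vertex (Condition~(1)) and that the signed number of convex bends on every face matches the required value (Condition~\ref{ITEM:faceBend}, equivalently its reformulation $(3')$). Property~\ref{ITEM:oneBend} holds automatically. All of these are local counts over the plane graph and are checkable in $O(|V|)$ time; if any fails, the instance is rejected.

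For the matching phase I model the bend-or-end condition as a bipartite matching. Each \emph{$0$-angle corner}---a pair of consecutive same-direction edges on one port, separated by a face $f$ with $x_{vf}=0$---is a demand that must be ``covered'' by assigning to their shared vertex the bend of one of the two edges, and only of an edge whose bend is convex on the side facing $f$. Because each edge's bend is convex on a single determined side, an edge can be the covering resource of at most one corner at its tail and at most one at its head, while each corner is adjacent to at most the two edges bounding it. Thus the bipartite graph $H$ between corners and edges has maximum degree two, so its connected components are paths and even cycles; a matching saturating every corner exists iff one exists in each component, which is decidable greedily in time linear in the component size. A saturating matching yields a valid bend-to-vertex assignment, and conversely any valid assignment is such a matching; combined with the passed checks and Lemma~\ref{LEMMA:kandinsky}, this certifies a planar L-drawing with the prescribed ports, in total linear time.

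The main obstacle I expect is the structural claim that makes the running time linear, namely that fixing the ports forces a \emph{unique} convex side for every bend and that each $0$-angle corner is contestable by exactly its two bounding edges; establishing this collapses the a priori general (and superlinear) bipartite matching to matchings in paths and even cycles. A secondary point to get right is the precise geometric case analysis of the four $(\mathrm{out},\mathrm{in})$ combinations, and the verification that Conditions~(1) and $(3')$ are exactly the constraints of the flow model that survive once the ports---and hence the angles $x_{vf}$ and the convexities---are fixed.
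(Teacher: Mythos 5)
Your proposal is essentially the paper's own proof: the labels fix the bend convexities and the vertex angles, Conditions~(1), (2), and (3') are verified directly, and the bend-or-end condition is reduced to a saturating matching in a bipartite graph of maximum degree two; your corner-versus-edge graph $H$ merges the paper's two steps (forcibly assigning all non-middle edges of each port, then matching ports having two middle edges against the still-unassigned middle edges) into a single matching, which is equivalent because forced corners have degree one in $H$.

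One slip must be fixed, though: by the bend-or-end property, a bend covers a $0$-angle corner lying in face $f$ only if it forms a $3\pi/2$ angle \emph{inside} $f$, i.e., only if its convex ($\pi/2$) side lies in the face on the \emph{other} side of the edge. Your adjacency criterion ``convex on the side facing $f$'' is exactly inverted; taken literally it would reject legal within-port orders and accept illegal ones, whereas with the corrected (reflex-in-$f$) criterion your matching also silently catches an illegal clockwise order within a port---the paper checks this explicitly during its compatibility phase---since such a corner has no incident reflex bend and hence degree zero in $H$.
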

\begin{sketch}  First, we
  have to check whether the cyclic order of the edges around a vertex
  is compatible with the labels. The labels determine the
  bends and the angles around the vertices, i.e., $x_{fe}^v+x_{fe}^w$
  for each edge $e=(v,w)$ and each incident face $f$, and $x_{vf}$ for
  each vertex $v$ and each incidence to a face $f$. We check whether
  these values fulfill Conditions~1, 2, and~$3'$. In order to also check
  Condition~4, we first assign for each port of a vertex $v$, all but
  the middle edges to $v$ (where a \emph{middle edge} of a port is the
  last edge in clockwise order bending to the left or the first edge
  bending to the right).  We check whether we thereby assign an edge
  more than once.  Assigning the middle edges can be reduced to a
  matching problem in a bipartite graph of maximum degree~2 where
  the nodes on one side are the ports with two middle edges and
  the nodes on the other side are the unassigned edges.
\end{sketch}

\section{Upward- and Upward-Rightward Planar L-Drawings}\label{se:fixed-embedding}

In this section, we characterize (see Theorem~\ref{th:characterization}) and construct (see Theorem~\ref{th:test-upward-upwardrightward}) upward-planar and upward-rightward planar L-drawings.

\subsection{A Characterization via Bitonic \st-Orderings}\label{se:char}

Characterizing the plane directed graphs that admit an L-drawing is an elusive goal.
However, we can characterize two natural subclasses of planar L-drawings via bitonic \st-orderings. 

\begin{restatable}{theorem}{theoremcharacterization}
\label{th:characterization}
A planar \st-graph admits an upward- (upward-rightward-) planar L-drawing if and only if it admits a  bitonic (monotonically decreasing) pair. 
\end{restatable}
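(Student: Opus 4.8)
The plan is to prove both biconditionals by showing that an upward-planar L-drawing and a bitonic pair carry exactly the same combinatorial data, with the upward-rightward / monotonically-decreasing case arising as the specialization in which every edge is forced to bend to the right. I would treat the two implications separately and, in each, handle the upward-rightward variant as a refinement of the upward one.

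For the direction from a drawing to a pair, I would take any upward-planar L-drawing $\Gamma$ of the \st-graph $G$, let $\mathcal E$ be its (upward-planar, hence bimodal) embedding, and let $\pi$ rank the vertices by $y$-coordinate. Since every edge is $y$-monotone with its vertical segment directed upward, $y(u)<y(v)$ for each edge $(u,v)$, so $\pi$ is an \st-ordering. The crux is a local geometric claim at each vertex $v$: all outgoing edges leave $v$ upward along the line $x=x(v)$ and then bend either left (to a successor of smaller $x$) or right (larger $x$); by planarity two left-bending edges cannot cross, and inspecting the point where their common vertical trunk splits shows that the lower-bending one lies counterclockwise of, i.e.\ to the left of, the higher one. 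Hence, read left-to-right, the left-benders appear by increasing $\pi$ and, symmetrically, the right-benders by decreasing $\pi$, with the globally highest successor at the transition. This is exactly the statement that $S(v)$ is bitonic, so $\langle\mathcal E,\pi\rangle$ is a bitonic pair. If moreover $\Gamma$ is upward-rightward, every edge also increases in $x$, so there are no left-benders and each $S(v)$ is monotonically decreasing.

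For the converse I would reconstruct a drawing from a bitonic pair $\langle\mathcal E,\pi\rangle$ by an incremental sweep that inserts the vertices in the order of $\pi$ (so $s$ first, $t$ last) and places $v_i$ at $y$-coordinate $\pi(v_i)=i$. I maintain the invariant that after inserting $v_1,\dots,v_i$ I have a planar upward L-drawing of the induced subgraph together with a vertical stub at $x(u)$ for every still-incomplete outgoing edge of an already-placed vertex $u$ (edges with a common tail overlapping on the line $x=x(u)$), and that these stubs occur along the current upper contour in the left-to-right order dictated by $\mathcal E$. When inserting $v_{i+1}$, its incoming edges correspond to stubs that, by $\mathcal E$ together with the fact that $\pi$ is an \st-ordering, form a contiguous block of the contour; I raise those edges to height $i+1$, bend them horizontally into $v_{i+1}$ (those arriving from the left enter its left port, those from the right its right port), place $v_{i+1}$ at a free $x$-coordinate inside the vacated gap, and open new stubs for its outgoing edges. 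Here the bitonic splitting of $S(v_{i+1})=\langle w_1,\dots,w_k\rangle$ at its peak $w_h$ is precisely what lets me assign $x$-coordinates so that the new stubs for $w_1,\dots,w_h$ sit to the left of $x(v_{i+1})$ and those for $w_{h+1},\dots,w_k$ to its right, guaranteeing that each edge will later bend in the unique direction consistent with planarity. In the monotonically-decreasing case there is no increasing part, so all stubs open to the right of $v_{i+1}$ and every resulting edge goes up and to the right, yielding an upward-rightward drawing.

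The main obstacle is the converse, and within it the contour invariant: I must prove that the not-yet-completed outgoing edges can always be kept on a single, totally ordered upper contour whose order agrees with $\mathcal E$, and that the predecessors of each newly inserted vertex occupy a contiguous stretch of that contour. Establishing this contiguity—and hence that the horizontal entry segments of $v_{i+1}$ cross neither each other nor previously drawn edges—is where the upward-planarity of $\mathcal E$ and the \st-ordering must be combined carefully; the bitonic condition then supplies exactly the left/right placement freedom needed so that every edge, once its head is finally inserted, can be completed as a single L without crossings. I expect this bookkeeping, rather than any deep idea, to be the technically delicate part, and I would isolate the order-preservation and contiguity of the contour as a separate lemma.
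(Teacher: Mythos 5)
Your overall architecture coincides with the paper's: the forward direction reads the ordering off the $y$-coordinates (your left-bender/right-bender analysis is a correct, fleshed-out version of what the paper merely asserts), and the converse is the same Gronemann-style incremental insertion in $\pi$-order with $y(v_i)=\pi(v_i)$ and $x$-coordinates fixed on the fly. However, in the converse there is a genuine gap, and it sits exactly at the step you defer. The lemma you propose to isolate --- order-preservation of the contour and contiguity of the incoming pending edges --- is the standard, comparatively easy part; it is not where the difficulty lies.

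The missing idea is the placement rule, which you leave as ``place $v_{i+1}$ at a free $x$-coordinate inside the vacated gap.'' This choice is not free, and some choices fail. If a predecessor $u_j$ still has pending outgoing edges lying to the left (resp.\ right) of $(u_j,v_{i+1})$ in the embedding, those edges end at vertices placed later, hence higher, so bitonicity of $S(u_j)$ forces $(u_j,v_{i+1})$ into the decreasing (resp.\ increasing) part of the list, i.e., forces $v_{i+1}$ to lie to the right (resp.\ left) of $x(u_j)$; an arbitrary position in the gap violates such a constraint and the edge can no longer be completed as a planar L later. The paper resolves this with two concrete ingredients absent from your sketch: (i) the new vertex is placed immediately to the left of its rightmost predecessor (the constraint $u_{k-1}\prec v_i\prec u_k$), which is consistent because every predecessor other than the two extreme ones can have no pending edge besides the one to $v_{i+1}$ (such an edge would be trapped inside the region bounded by the incoming fan --- this is the ``single sink on the outer face'' argument), so for those predecessors $v_{i+1}$ is the apex and its bend direction is free; and (ii) when $v_{i+1}$ has a single predecessor $u_1$ there is no gap at all, and deciding on which side of $x(u_1)$ to place $v_{i+1}$ requires look-ahead into the embedding: the paper, following Gronemann, inserts a dummy edge from the left or right outer-face neighbor of $u_1$, chosen according to where $u_1$'s highest successor $s_{\max}$ sits in the cyclic order around $u_1$ (these dummy edges are also what make the upward-rightward case work, by letting the one forced leftward edge be a dummy). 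Relatedly, your statement that at insertion time the outgoing stubs of $v_{i+1}$ can be put left or right of $x(v_{i+1})$ is slightly confused: in an L-drawing all outgoing edges emanate from the single vertical trunk at $x(v_{i+1})$; the bitonic split is a constraint on where the heads $w_1,\dots,w_{h-1}$ and $w_{h+1},\dots,w_k$ must be placed in later steps, and it is precisely these carried-forward constraints that a correct placement rule must keep satisfiable.
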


\begin{sketch}
  
  %
  ``$\Rightarrow$'': Let $G=(V,E)$ be an \st-graph with $n$ vertices. The $y$-coordinates of an upward-
  (upward-rightward-) planar L-drawing of $G$ yield a bitonic
  (monotonically decreasing) \st-ordering.
  \smallskip

  \noindent
  ``$\Leftarrow$'': Given a bitonic (monotonically decreasing)
  \st-ordering $\pi$ of~$G=(V,E)$, we construct an upward-
  (upward-rightward-) planar L-drawing of $G$ using an idea of
  Gronemann~\cite{g-bsupg-GD16}. For each vertex $v$, we use $\pi(v)$ as its $y$-coordinate.  

  For the $x$-coordinates we use a linear extension of a partial order $\prec$. 
  Let $v_1,\dots,v_n$ be the vertices of $G$ in the
  ordering given by $\pi$.  Let $G_i$ be the subgraph of~$G$ induced
  by $V_i=\{v_1,\dots,v_i\}$. To construct $\prec$,
  we augment~$G_i$ to $\overline{G}_i$ in such a way that the outer
  face~$f_{\overline{G}_i}$ of~$\overline{G}_i$ is a simple cycle and all vertices on
  $f_{\overline{G}_i}$ are comparable:
  We start with a triangle on $v_1$ and two new vertices $v_{-1}$
  and $v_{-2}$, with $y$-coordinates $-1$ and $-2$, respectively, and set $v_{-2} \prec v_1 \prec v_{-1}$. For
  $i=2,\dots,n$, let $u_1,\dots,u_k$ be the predecessors of $v_i$ in
  ascending order with respect to~$\prec$. If $\pi$ is monotonically
  decreasing or if $k=1$, we add an edge $e$ with head $v_i$. The
  tail of $e$ is the right neighbor $r$ of $u_k$ or the left neighbor $\ell$ of
  $u_1$ on $f_{\overline{G}_i}$, respectively, if the maximum successor $s_{\max}$ of
  $u_1$ is to the left (or equal to) or the right of $v_i$,
  respectively; see Fig.~\ref{fig:bitonic-a}. Now let $u_1,\dots,u_k$
  be the predecessors of $v_i$ in the possibly augmented graph; see
  Fig.~\ref{fig:bitonic-b}. We add the condition
  $u_{k-1} \prec v_i \prec u_k$.
\end{sketch}

\begin{figure}[tb]
  \begin{subfigure}{.45\textwidth}
    \centering
    \includegraphics[page=1]{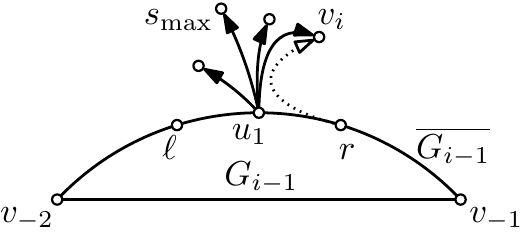}
    \vspace{-2ex}

    \caption{}
    \label{fig:bitonic-a}
  \end{subfigure}
  \hfill
  \begin{subfigure}{.45\textwidth}
    \centering
    \includegraphics{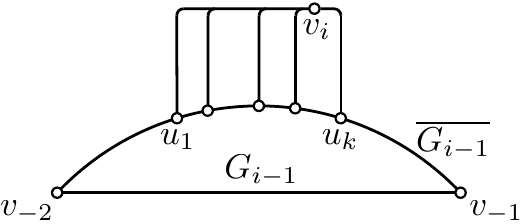}

    \vspace{-2ex}
    \caption{}
    \label{fig:bitonic-b}
  \end{subfigure}

  \caption{How to turn a bitonic \st-ordering into a planar L-drawing.}
  \label{fig:bitonic}
\end{figure}

\begin{corollary}\label{COR:undirected}
  Any undirected planar graph can be oriented such that it admits an
  upward-planar L-drawing.
\end{corollary}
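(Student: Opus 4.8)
The plan is to combine the characterization of Theorem~\ref{th:characterization} with Gronemann's result that every \emph{biconnected} planar graph admits a bitonic \st-ordering~\cite{gronemann:gd14}. Since Theorem~\ref{th:characterization} reduces the existence of an upward-planar L-drawing to the purely combinatorial question of whether a suitable orientation and embedding admit a bitonic pair, it suffices to exhibit such a pair. The key extra ingredient beyond the cited existence result is a standard augmentation-and-restriction argument to pass from biconnected graphs to arbitrary planar graphs.

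First I would treat the biconnected case. Let $G$ be a biconnected planar graph. By~\cite{gronemann:gd14}, $G$ has a bitonic \st-ordering~$\pi$; fixing a planar embedding~${\cal E}$ that realizes the associated \st-orientation as an upward-planar embedding yields a plane \st-graph whose pair $\langle {\cal E},\pi\rangle$ is bitonic. The ``$\Leftarrow$'' direction of Theorem~\ref{th:characterization} then provides an upward-planar L-drawing of this oriented copy of~$G$, which is exactly what is required.

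For an arbitrary planar graph~$G$ (with $n\ge 3$; the cases $n\le 2$ are trivial), I would first augment~$G$ to a maximal planar graph~$G^+$ by adding edges while preserving planarity. Then $G^+$ is $3$-connected (a triangle for $n=3$), hence biconnected, so by the previous paragraph it can be oriented and drawn as an upward-planar L-drawing~$\Gamma$. Deleting from~$\Gamma$ the curves of all edges in $E(G^+)\setminus E(G)$, and keeping the induced orientation on the surviving edges, produces an orientation of~$G$ together with a drawing in which each remaining edge is still an L-shape, no crossings are introduced by the deletions, and every edge stays $y$-monotone. Thus this subdrawing is an upward-planar L-drawing of the oriented~$G$.

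The one point that needs care --- and the step I would single out as the crux --- is that restricting an upward-planar L-drawing of~$G^+$ to the subgraph~$G$ remains an upward-planar L-drawing. This holds because removing edges can neither create crossings nor destroy the $y$-monotonicity or the L-shape of any surviving edge, so the geometric properties are inherited verbatim. I would also note that the orientation inherited by~$G$ need not turn~$G$ into an \st-graph, but this is immaterial: the corollary asserts only that \emph{some} orientation of~$G$ admits an upward-planar L-drawing, and the subdrawing of~$\Gamma$ witnesses precisely this.
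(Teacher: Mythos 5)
Your proposal is correct and follows essentially the same route as the paper's proof: triangulate~$G$, invoke Gronemann's bitonic \st-ordering for biconnected undirected planar graphs~\cite{gronemann:gd14}, orient edges from smaller to larger \st-numbers, and apply the ``$\Leftarrow$'' direction of Theorem~\ref{th:characterization}. The only difference is that you spell out the final restriction step (deleting the triangulation edges from the drawing preserves planarity, $y$-monotonicity, and the L-shapes), which the paper leaves implicit.
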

\begin{proof}
  Triangulate the graph $G$ and construct a bitonic \st-ordering
  for undirected graphs~\cite{gronemann:gd14}. Orient the edges from
  smaller to larger \st-numbers.
\end{proof}

\subsection{Bitonic \st-Orderings in the Variable Embedding Setting}
\label{sse:directed}

By Theorem~\ref{th:characterization}, testing for the existence of an upward- (upward-rightward-) planar L-drawing of a planar \st-graph $G$ reduces to testing for the existence of a bitonic (monotonically decreasing) pair $\left< {\cal E}, \pi \right>$ for $G$. In this section, we give a linear-time algorithm to test an \st-graph for the existence of a bitonic pair $\left< {\cal E}, \pi \right>$.

\springerarxiv{The following lemma is proved in~\cite{cccdnptw-plddg-tr-17}.}{The following lemma is proved in~\ref{apx:R-node}.}

\begin{restatable}{lemma}{lemmanewsource}\label{le:edgeST}
  Let $G=(V,E)$ be a planar \st-graph with source~$s$, sink~$t$, and
  $(s,t) \notin E$.  Then there exists a supergraph $G'=(V',E')$
  of~$G$, where $V'=V \cup \{s'\}$ and $E' = E \cup \{(s',s), (s',t)\}$, such that (i) $G'$ is an \st-graph with source~$s'$ and sink~$t$, and (ii)~$G'$ admits a bitonic (resp., monotonically increasing) \st-ordering if and only if $G$ does.
\end{restatable}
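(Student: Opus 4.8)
The plan is to prove the two claims (i) and (ii) separately, and to treat the bitonic and the monotonically increasing cases in parallel, since the only vertex we touch, $s'$, has degree~$2$ and carries a trivially bitonic (and, when oriented suitably, monotonically increasing) successor list.

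For (i) I would argue directly from the definitions. Since $s'$ has no incoming edge, no directed cycle can pass through it, so $G'$ is acyclic. The vertex $s'$ is a source; the former source $s$ now has the incoming edge $(s',s)$, and every other vertex already had an incoming edge in $G$ (as $s$ was the unique source), so $s'$ is the \emph{unique} source. Dually, $t$ remains the unique sink because we add edges only \emph{into} $t$. Finally, to obtain a planar embedding with $s'$ and $t$ on a common face, I would start from an embedding $\mathcal{E}$ of $G$ with $s$ and $t$ on the outer face (which exists because $G$ is a planar \st-graph), place $s'$ in the outer face, and draw $(s',s)$ and $(s',t)$ inside it; the outer face is a topological disk with $s$ and $t$ on its boundary, so these two curves can be drawn disjointly, and afterwards $s'$ and $t$ both lie on the new outer face.

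For (ii) I would set up a correspondence between embeddings/orderings of $G$ and of $G'$ that simply adds or deletes $s'$ together with its two edges. Given a bitonic (resp.\ monotonically increasing) pair $\langle \mathcal{E}, \pi\rangle$ of $G$, I extend $\mathcal{E}$ as in (i) and set $\pi'(s')$ below all other values, keeping $\pi'=\pi$ on $V$; in the monotonically increasing case I route $(s',t)$ so that $t$ follows $s$ in the successor list of $s'$, which is legitimate since $\pi(s)<\pi(t)$. The successor list of $s'$ has length~$2$, hence is bitonic (and increasing by the routing choice); the lists of $s$ and $t$ are unchanged since $(s',s)$ and $(s',t)$ are incoming edges and $t$ has no successors at all; all other lists are untouched. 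Conversely, from a pair $\langle \mathcal{E}', \pi'\rangle$ of $G'$ I delete $s'$ and its two edges and restrict $\pi'$ to $V$. Deleting incoming edges cannot break bitonicity of any successor list, and the list of $s'$ simply disappears, so the bitonic/monotone condition is inherited by $G$.

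The step I expect to be the real obstacle is verifying that the \emph{restricted} embedding in the converse direction is again \emph{upward}-planar, i.e.\ that $s$ and $t$ lie on the outer face and that bimodality is preserved. Here I would use that $s'$ has degree~$2$ and, being the source of $G'$, lies on the outer face of $\mathcal{E}'$; hence $s'$ is incident to exactly two faces $f_1,f_2$, the boundary of each containing the path $s \leftarrow s' \rightarrow t$, and one of which is the outer face. Deleting $s'$ and its two edges merges $f_1$ and $f_2$ into a single face whose boundary contains both $s$ and $t$ and which is the new outer face, so $s,t$ indeed lie on the outer face of the restriction. Bimodality is immediate: at $s$ we only remove an incoming edge, leaving the source with all edges outgoing, and at $t$ we only remove an incoming edge, leaving the sink with all edges incoming, while no other vertex is affected. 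Combined with the order-preserving restriction of $\pi'$, this yields the desired bitonic (resp.\ monotonically increasing) pair of $G$.
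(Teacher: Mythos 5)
Your proof is correct and follows essentially the same route as the paper's: extend the embedding by placing $s'$ in the outer face and routing $(s',t)$ so that $t$ follows $s$ (making the length-2 successor list of $s'$ increasing), and conversely restrict the embedding and the ordering to $G$. You additionally spell out part (i) and the face-merging argument for why the restriction keeps $s$ and $t$ on the outer face and preserves the successor lists---details the paper's proof dismisses with ``clearly''.
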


By Lemma~\ref{le:edgeST}, in the following we assume that an \st-graph
$G$ always contains edge $(s,t)$. Hence, either $G$ coincides with
edge $(s,t)$, which trivially admits a bitonic \st-ordering, or it is
biconnected.

A path $p$ from $u$ to $v$ in a directed graph is {\em monotonic increasing} ({\em monotonic decreasing}) if it is exclusively composed of forward (backward) edges. A path $p$ is {\em monotonic} if it is either monotonic increasing or monotonic decreasing.
A path $p$ with endpoints $u$ and $v$ is {\em bitonic} if it consists of 
a monotonic increasing path from $u$ to $w$ and of a monotonic decreasing path from $w$ to $v$; if $u \neq w$ and $v  \neq w$, then the path $p$ is {\em strictly bitonic} and $w$ is the {\em apex} of $p$.
An \st-graph $G$ is {\em $v$-monotonic}, {\em $v$-bitonic}, or  {\em strictly $v$-bitonic} if the subgraph of $G$ induced by the successors of $v$ is,
after the removal of possible transitive edges, a monotonic, bitonic, or strictly-bitonic path $p$, respectively. The apex of $p$, if any, is also called the {\em apex} of $v$ in $G$. 
If $p$ is monotonic and it is directed from $u$ to $w$, then vertices $u$ and $w$ are the {\em first successor of $v$ in $G$} and the  {\em last successor of $v$ in $G$}, respectively.
If $p$ is strictly bitonic, then its endpoints are the {\em first successors of $v$ in $G$}.
If $p$ consists of a single vertex, then such a vertex is both the {\em first and the last successor of $v$ in $G$}.
Let $G$ be an \st-graph and let $G^*$ be an \st-graph obtained by augmenting $G$ with directed edges. 
We say that the pair $\left<G, G^* \right>$ is {\em $v$-monotonic}, {\em $v$-bitonic}, or {\em strictly $v$-bitonic} if the subgraph of $G^*$ induced by the successors of $v$ in $G$ is, after the removal of possible transitive edges, a monotonic, bitonic, or strictly-bitonic path, respectively.

Although Gronemann~\cite{g-bsupg-GD16} didn't state this explicitly,
the following theorem immediately follows from the proof of his
Lemma~4.

\begin{theorem}[\hspace{-.01pt}\cite{g-bsupg-GD16}] \label{th:st-fixed-augmentation}
  A plane \st-graph $G=(V,E)$ admits a bitonic \st-ordering if and
  only if it can be augmented with directed edges to a planar \st-graph $G^*$ such that, for each
  vertex $v \in V$, the pair $\left<G, G^* \right>$ is $v$-bitonic. Further, any \st-ordering of $G^*$ is a bitonic \st-ordering of $G$.
\end{theorem}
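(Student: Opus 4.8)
The plan is to prove the two implications separately, handling ``$\Leftarrow$'' first since it simultaneously yields the final sentence of the statement, and then giving an explicit face-by-face augmentation for ``$\Rightarrow$''. Throughout I take $G^*$ to be a planar \st-graph whose plane embedding extends that of~$G$, so that the successors of any $v$ in $G$ occur in the same left-to-right order in both embeddings.

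For ``$\Leftarrow$'', suppose $G$ has been augmented to $G^*$ so that $\langle G, G^*\rangle$ is $v$-bitonic for every $v \in V$. Since $G$ is a spanning subgraph of $G^*$, every \st-ordering $\pi$ of $G^*$ is also an \st-ordering of $G$; I would show it is bitonic. Fix $v$ and let $S(v)=\langle v_1,\dots,v_k\rangle$ be its successors in left-to-right order. By hypothesis, the subgraph of $G^*$ induced by $S(v)$ transitively reduces to a bitonic path $P$. The first step is to observe that, because $G^*$ also contains all spokes $(v,v_j)$, no edge of $P$ may cross a spoke or trap an intermediate successor away from $t$; hence each edge of $P$ joins two successors that are consecutive in the rotation at~$v$, forcing $P$ to be $v_1\,\text{--}\,v_2\,\text{--}\cdots\text{--}\,v_k$ with its apex at some $v_h$. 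As every edge of the \st-graph $G^*$ runs from smaller to larger $\pi$-value, the increasing side of $P$ gives $\pi(v_1)<\dots<\pi(v_h)$ and the decreasing side gives $\pi(v_h)>\dots>\pi(v_k)$, so $S(v)$ is bitonic. Since $v$ and $\pi$ were arbitrary, $\pi$ is a bitonic \st-ordering of $G$, which is exactly the ``Further'' clause.

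For ``$\Rightarrow$'', let $\pi$ be a bitonic \st-ordering of $G$. I would construct $G^*$ by inserting one edge into each inner face: in a face $f$, its unique source $\sigma_f$ has exactly two edges on $\partial f$, leading to successors $a,b$ of $\sigma_f$ that are consecutive in the rotation at $\sigma_f$; I add an edge between $a$ and $b$ routed inside $f$ and oriented from smaller to larger $\pi$-value. Because each face has a \emph{unique} source, it receives at most one new edge, so the insertions never interfere and $G^*$ stays planar; and since every new edge respects $\pi$, the ordering remains topological and $G^*$ is again a planar \st-graph with the same $s$ and $t$. Now fix $v$ with successors $v_1,\dots,v_k$ and let $v_h$ be the successor of largest $\pi$-value; the inner faces between the consecutive out-edges of $v$ contribute precisely the edges of the spanning path $v_1\!\to\!\dots\!\to\!v_h\!\leftarrow\!\dots\!\leftarrow\!v_k$, which is a bitonic path with apex $v_h$.

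It remains to verify that this path is exactly the transitive reduction of the subgraph of $G^*$ induced by $S(v)$, and this is the main obstacle. Any further edge among the $v_j$, whether already in $G$ or forced into some other source's face, joins two of them; such a chord $v_i\,\text{--}\,v_j$ is transitive with respect to the spanning path as long as $v_i$ and $v_j$ lie on the same monotone side of $v_h$ (or one of them is $v_h$). A chord strictly \emph{straddling} the apex, however, would survive reduction and destroy the path. The key point is that such a straddling chord cannot exist: it would be oriented by $\pi$ from one side of the apex to the other, and a short case analysis shows this is incompatible with $S(v)$ being bitonic. Establishing this apex-exclusion uniformly over all vertices, and checking that the per-face insertions keep every face an \st-face, is the technical heart of the argument and is precisely what is carried out in the proof of Gronemann's Lemma~4~\cite{g-bsupg-GD16}.
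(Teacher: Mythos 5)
The paper never actually proves this statement---it is attributed to Gronemann with the remark that it ``immediately follows from the proof of his Lemma~4''---so your attempt must be judged as a self-contained reconstruction, and it has a genuine gap in the ``$\Leftarrow$'' direction. Your central claim there is that planarity of $G^*$ forces every edge of the reduced path $P$ to join rotation-consecutive successors of $v$, so that $P$ is the left-to-right path $v_1,\dots,v_k$. This does not follow, because the paper's definition of a $v$-bitonic pair is purely combinatorial and never mentions the embedding. Concretely, let $G$ have edges $s\to v$, $v\to a$, $v\to b$, $v\to c$, $a\to c$, $b\to c$, $c\to t$, embedded upward so that $S(v)=\langle a,b,c\rangle$ and $b$ lies inside the cycle $v,a,c$ (route $v\to c$ to the right of $b$). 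With the trivial augmentation $G^*=G$, the subgraph induced by $\{a,b,c\}$ is $a\to c\leftarrow b$, a bitonic path with apex $c$, and every other vertex has a single successor, so the hypothesis of the theorem holds. The chord $a\to c$ joins non-consecutive successors, yet nothing is ``trapped away from $t$'': the enclosed vertex $b$ escapes through $c$, so your trapping argument produces no contradiction. And the conclusion indeed fails: the \st-ordering $\pi(s),\pi(v),\pi(b),\pi(a),\pi(c),\pi(t)=1,\dots,6$ restricts on $S(v)$ to $(4,3,5)$, which is not bitonic. (A variant with $b\to a$, $b\to c$ in $G$ and augmenting edge $a\to c$ even breaks the ``if'' direction.) The point is that the reduced path respecting the left-to-right successor order is exactly what Gronemann's augmentation guarantees by construction---his added edges join consecutive successors---and it must be built into the statement or hypothesis; it cannot be derived from planarity as you attempt.

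In the ``$\Rightarrow$'' direction your construction (one edge per inner face, inserted at the face's unique source and oriented by $\pi$) is the right one, and your planarity and acyclicity arguments are sound. However, the step you yourself identify as the technical heart---excluding a chord $(v_i,v_j)$ with $i<h<j$ straddling the apex---is not, as you assert, a consequence of ``orientation by $\pi$ being incompatible with $S(v)$ being bitonic'': the values $\pi(v_i)<\pi(v_j)<\pi(v_h)$ are perfectly consistent with a bitonic sequence, so no $\pi$-arithmetic case analysis can rule the chord out. The correct argument is geometric: such a chord together with the two spokes encloses the apex successor $v_h$ in a region it can leave only through $v_i$ or $v_j$ (no incoming edge of $v$ lies inside that region, and neither $s$ nor $t$ can lie inside it without creating a directed cycle or removing $s$ or $t$ from the outer face); any directed path from $v_h$ to $v_i$ or $v_j$ then yields $\pi(v_h)<\max\{\pi(v_i),\pi(v_j)\}$, contradicting bitonicity of $\pi$. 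Since you instead defer precisely this step to Gronemann's Lemma~4---the very result the theorem is extracted from---your write-up is not a proof but a reduction of the statement to itself.
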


In the remainder of the section, we show how to test in linear-time whether it is possible to augment a biconnected \st-graph $G$ to an \st-graph $G^*$ in such a way that the pair $\left<G, G^* \right>$ is $v$-bitonic, for any vertex $v$ of $G$. By virtue of Theorem~\ref{th:st-fixed-augmentation}, this allows us to test the existence of a bitonic pair $\left< {\cal E},\pi \right>$ for $G$. We perform a
bottom-up visit of the SPQR-tree $T$ of $G$ rooted at the reference
edge $(s,t)$ and show how to compute an augmentation for the pertinent
graph of each node $\mu \in T$ together with an embedding of it, if any exists.

Note that each vertex in an \st-graph is on a directed path from $s$
to $t$. Further, by the choice of the reference edge, neither $s$ nor
$t$ are internal vertices of the pertinent graph of any node of $T$. 
This leads to the next observation.

\begin{observation}
  \label{obs:child-orientation}
  For each node $\mu \in T$ with poles $u$ and $v$, the pertinent
  graph $\pert(\mu)$ of $\mu$ is an \st-graph whose source and sink
  are $u$ and $v$, or vice versa.
\end{observation}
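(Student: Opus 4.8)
The plan is to combine three ingredients: that $\pert(\mu)$ is a subgraph of the DAG $G$ and hence acyclic, the structural property of SPQR-trees that $\pert(\mu)$ shares with the rest of $G$ only its two poles $u$ and $v$, and the two facts recalled just above the observation---namely, that every vertex of an \st-graph lies on a directed $s$-$t$ path, and that by the choice of the reference edge $(s,t)$ neither $s$ nor $t$ is an internal vertex of any pertinent graph. The consequence I want to exploit from the SPQR structure is that an \emph{internal} vertex $w$ of $\pert(\mu)$ (i.e., $w \notin \{u,v\}$) has all of its incident edges inside $\pert(\mu)$.

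First I would record that $\pert(\mu)$, being a subgraph of the acyclic graph $G$, is itself acyclic; as a finite nonempty directed acyclic graph it therefore has at least one source and at least one sink. Moreover $\pert(\mu)$ is connected and contains the two distinct poles $u$ and $v$, so it has at least one edge.

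Next I would show that every internal vertex $w$ of $\pert(\mu)$ is neither a source nor a sink of $\pert(\mu)$. Since $s$ and $t$ are never internal, $w \notin \{s,t\}$, so on a directed $s$-$t$ path through $w$ the vertex $w$ has both a predecessor and a successor; hence $w$ has in-degree and out-degree at least $1$ in $G$. As $w$ is internal, all its incident edges lie in $\pert(\mu)$, so these degrees are unchanged in $\pert(\mu)$. Consequently every source and every sink of $\pert(\mu)$ must be one of the poles $u$ or $v$. Finally I would pin down that $u$ and $v$ are, in some order, the unique source and the unique sink: taking any edge of $\pert(\mu)$ and extending it to a maximal directed path $Q$, the endpoints of $Q$ are a source and a sink of $\pert(\mu)$, hence poles; since $Q$ has at least one edge, these two endpoints are distinct and therefore equal $u$ and $v$. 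The endpoint ending $Q$ has in-degree at least $1$ (so is not a source) and the one starting $Q$ has out-degree at least $1$ (so is not a sink), which together with the internal-vertex argument makes the source and the sink unique. Thus $\pert(\mu)$ is an \st-graph whose source and sink are $u$ and $v$, or vice versa.

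I expect the only delicate point to be this last step---ruling out that both poles are sources (or both sinks)---which is exactly what the maximal-path argument handles, together with the standard SPQR fact that $\pert(\mu)$ is connected and hence carries at least one edge between its distinct poles; everything else is a direct consequence of acyclicity and the $s$-$t$ path property.
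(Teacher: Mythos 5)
Your proof is correct and takes essentially the same route the paper intends: the paper offers no separate proof, presenting the observation as an immediate consequence of exactly the two facts you invoke (every vertex of an \st-graph lies on a directed path from $s$ to $t$, and by the choice of the reference edge neither $s$ nor $t$ is internal to any $\pert(\mu)$), combined with the standard split-pair property that internal vertices of $\pert(\mu)$ have all incident edges inside $\pert(\mu)$. You have merely filled in the routine details the paper leaves implicit---acyclicity, and the maximal-path argument pinning down a unique source and unique sink at the two poles---all of which are sound.
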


Let $e$ be a virtual edge of $skel(\mu)$ corresponding to a node $\nu$ whose pertinent graph is an $st$-graph with source $s_\nu$ and sink $t_\nu$. By Observation~\ref{obs:child-orientation}, we say that $e$ {\em exits} $s_\nu$ and {\em enters} $t_\nu$.

The outline of the algorithm is as follows. Consider a node $\mu \in
T$ and suppose that, for each child $\mu_i$ of $\mu$, we have already
computed a pair $\left< \pert^*(\mu_i), {\cal E}_i^* \right>$ such
that $\pert^*(\mu_i)$ is an augmentation of $\pert(\mu_i)$, ${\cal
  E}_i^*$ is an embedding of $\pert^*(\mu_i)$, and $\left< \pert(\mu_i),\pert^*(\mu_i)\right>$ is $v$-bitonic, for each vertex $v$ of $\pert(\mu_i)$. 
  We show how to compute a pair $\left< \pert^*(\mu),
  {\cal E}^* \right>$ for node $\mu$, such that (i) the pair 
$\left< \pert(\mu),\pert^*(\mu)\right>$ is $v$-bitonic for each vertex $v$ in
$\pert(\mu)$, and (ii) the restriction of ${\cal E}^*$
to $\pert^*(\mu_i)$ is ${\cal E}_i^*$, up to a flip.  In the
following, for the sake of clarity, we first describe an overall quadratic-time
algorithm. We will refine this algorithm to run in linear time at the end of the section. 

For a node $\mu \in T$, we say that the pair $\left< \pert(\mu),\pert^*(\mu) \right>$ is of {\texttt Type B} if it is strictly $s_\mu$-bitonic and it is of {\texttt Type M} if it is $s_\mu$-monotonic. For simplicity, we also say that node
$\mu$ is of {\texttt Type B} or of {\texttt Type M} when, during the
traversal of $T$, we have constructed an augmentation $\pert^*(\mu)$
for $\mu$ such that $\left< \pert(\mu),\pert^*(\mu) \right>$ is of {\texttt Type B} or of {\texttt Type M}, respectively. 
Figure~\ref{fig:replacement} shows an example where an augmentation
$G^*$ of $G$
    contains an augmentation
    $\pert^*(\mu)$ for $\mu$ which is replaced with an augmentation $\pert^+(\mu)$ such that
    $\left<\pert(\mu),\pert^*(\mu)\right>$ is of {\texttt Type~B},
    $\left<\pert(\mu),\pert^+(\mu)\right>$ is of~{\texttt Type~M},
    and $G^*$ admits a bitonic \st-ordering if and only if it still does after this replacement.
The following lemma formally shows that this type of replacement is always possible.


\begin{restatable}{lemma}{lemmapreferable}
  \label{le:preferable}
  Let $G$ be a biconnected \st-graph and let $G^*$ be an augmentation
  of $G$ such that $\left< G,G^* \right>$ is $v$-bitonic, for each vertex $v$ of $G$.  Consider
  a node $\mu$ of the SPQR-tree of $G$ and let $\pert^*(\mu)$ be the
  subgraph of $G^*$ induced by the vertices of $\pert(\mu)$. Suppose
  that $\left< \pert(\mu),\pert^*(\mu) \right>$ is of {\texttt Type B} and that $\pert(\mu)$
  also admits an augmentation $\pert^+(\mu)$
  such that $\left< \pert(\mu),\pert^+(\mu) \right>$ is of {\texttt Type M} and it is $v$-bitonic, for each vertex $v$ of $\pert(\mu)$.  There
  exists an augmentation $G^+$ of $G$ such that $\left< G,G^+ \right>$ is $v$-bitonic, for each vertex $v$ of
  $G$, and such that the subgraph of $G^+$ induced by
  the vertices of $\pert(\mu)$ is $\pert^+(\mu)$.
\end{restatable}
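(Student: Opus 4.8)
The plan is to obtain $G^+$ from $G^*$ by a purely local substitution at $\mu$: delete the augmentation edges of $G^*$ that have both endpoints inside $\pert(\mu)$ and insert instead the augmentation edges of $\pert^+(\mu)$, while keeping every edge of $G^*$ incident to at least one vertex outside $\pert(\mu)$. This makes the subgraph of $G^+$ induced by the vertices of $\pert(\mu)$ equal to $\pert^+(\mu)$, as required. For planarity I would argue by substitution of embeddings. Since $\pert^+(\mu)$ is an augmentation of $\pert(\mu)$, it is a planar \st-graph whose poles $s_\mu$ and $t_\mu$ lie on its outer face (Observation~\ref{obs:child-orientation}), and in the embedding of $G^*$ the subgraph $\pert^*(\mu)$ occupies a region whose interface with the rest of $G^*$ consists only of the two poles. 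Hence I can carry the embedding ${\cal E}^+$ of $\pert^+(\mu)$ into that region, flipped if necessary so that its left and right outer paths match the sides along which $\pert^*(\mu)$ was glued. Acyclicity and the uniqueness of the source and sink survive because the two parts meet only at $s_\mu$ and $t_\mu$ and each is an \st-graph with these poles as its extremes; thus $G^+$ is a valid augmentation of $G$.

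It remains to show that $\langle G,G^+\rangle$ is $v$-bitonic for every vertex $v$, which by Theorem~\ref{th:st-fixed-augmentation} is exactly what is needed. I would first dispatch every vertex other than $s_\mu$. A vertex $v$ internal to $\pert(\mu)$ has all of its outgoing edges, both those of $G$ and those of the augmentation, inside $\pert^+(\mu)$, so its successor list in $G^+$ coincides with its successor list in $\pert^+(\mu)$, which is $v$-bitonic by hypothesis. The sink pole $t_\mu$ has no outgoing edge inside $\pert(\mu)$, and every vertex outside $\pert(\mu)$, together with all of its outgoing edges, is untouched by the substitution; the successor lists of all these vertices are identical in $G^*$ and in $G^+$ and were bitonic in $G^*$. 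Thus the entire burden falls on the single source pole $s_\mu$, whose outgoing structure is the only one that genuinely changes.

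For $s_\mu$, the key observation is that its successors inside $\pert(\mu)$ form a contiguous block $S_{\mathrm{in}}$ in its left-to-right successor order, flanked by its outside successors. In $G^*$ the whole successor list is a bitonic path, and the sub-block $S_{\mathrm{in}}$ is \emph{strictly} bitonic (Type~B), so it goes up and then down; a sub-block of this shape must contain the global apex, and the two extreme block vertices $a,a'$ are precisely the first successors of $s_\mu$ in $\pert^*(\mu)$, carrying the at most two transition edges that join the block to the outside successors. When I substitute the monotone block of $\pert^+(\mu)$, I choose its flip so that its first successor sits on the side of $a$ and its last successor on the side of $a'$; then the block contributes a single monotone run whose high end becomes the new global apex, the increasing outside part feeds into the low end, and the decreasing outside part descends from the high end. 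The resulting successor list of $s_\mu$ is again a bitonic path, now with the apex relocated to the block's high end. \textbf{The main obstacle is exactly this interface compatibility:} I must argue that the first and last successors of $s_\mu$ produced by $\pert^+(\mu)$ are the same two vertices $a,a'$ that carried the transition edges in $G^*$ --- equivalently, that the extreme successors of $s_\mu$ along the left and right outer paths of $\pert(\mu)$ are invariant under the admissible re-embeddings --- so that the transition edges, and with them the successor lists of the outside neighbours feeding these edges, need not be disturbed.

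Should these extreme successors differ, the fallback is to re-route the two transition edges onto the new block ends and to verify that the two affected outside neighbours stay bitonic; this is harmless because each such neighbour has a single edge into the contiguous block $S_{\mathrm{in}}$, which lies entirely on one monotone flank of that neighbour's successor list, so exchanging its head within the block preserves the single-peak shape. Either way, the decisive point, and the place where essentially all of the difficulty of the lemma is concentrated, is that converting the apex-carrying strictly-bitonic block into a monotone block can always be absorbed by migrating the apex to the block's high end, without breaking bitonicity at $s_\mu$ or at any of its outside neighbours.
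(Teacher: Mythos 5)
Your ``fallback'' branch is, in essence, the paper's actual proof, and you should have committed to it from the start: the paper never attempts to keep the old transition edges. It first deletes from $G^*$ every augmentation edge joining an interior vertex of $\pert(\mu)$ that is not a successor of $s_\mu$ to an outside vertex that is not a successor of $s_\mu$ (harmless, since such edges never occur in any subgraph induced by the $G$-successors of a vertex), then deletes \emph{all} interior vertices of $\pert(\mu)$ together with their incident edges---so the old transition edges out of $v_l$ and $v_r$ vanish too---embeds $\pert^+(\mu)$ in the face left behind, and adds two fresh edges from $v_l$ and $v_r$ to the extreme successors of $s_\mu$ in $\pert^+(\mu)$. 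Consequently, what you single out as the ``main obstacle'' (whether the extremes of the new monotone block coincide with the old vertices $a,a'$) is a non-issue in the paper: rerouting is the default, not a repair. And indeed your primary branch would generally fail, since the extremes of the path induced by $\pert^+(\mu)$ on the successors of $s_\mu$ need not be $a,a'$ at all; keeping $v_l \to a$ with $a$ now an interior vertex of the monotone block gives $a$ in-degree two in the subgraph of $G^+$ induced by the $G$-successors of $s_\mu$, and since that edge is not transitive, the induced structure is no longer a path.

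Two further loose ends. First, your claim that in the embedding of $G^*$ the region occupied by $\pert^*(\mu)$ ``interfaces with the rest only at the two poles'' is false as stated: the transition edges cross that interface (you acknowledge this later), and so may augmentation edges of $G^*$ joining interior non-successors of $s_\mu$ to outside vertices---edges which your substitution rule explicitly \emph{keeps}, but for which the embedding ${\cal E}^+$ of $\pert^+(\mu)$ has no provision. The lemma quantifies over an arbitrary augmentation $G^*$, so these edges may exist; the paper's first step exists precisely to delete them. Second, your justification that rerouting is harmless for the two outside neighbours argues about the wrong object: bitonicity of $\left< G,G^+\right>$ at $v_l$ concerns the subgraph of $G^+$ induced by the successors of $v_l$ \emph{in $G$}, and since $v_l$ has no $G$-neighbour among the interior vertices of $\pert(\mu)$ (interior vertices of a pertinent graph attach to the rest of $G$ only through the poles), this induced subgraph is literally unchanged by any rerouting of augmentation edges. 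No ``single-peak flank'' argument about $v_l$'s successor list in the augmented graph is needed, nor is that list the relevant notion, because heads of augmentation edges are not successors in $G$.
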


\begin{figure}[tb!]
  \centering
  \begin{subfigure}{.3\textwidth}
    \centering
    \includegraphics[page=1,width=.7\textwidth]{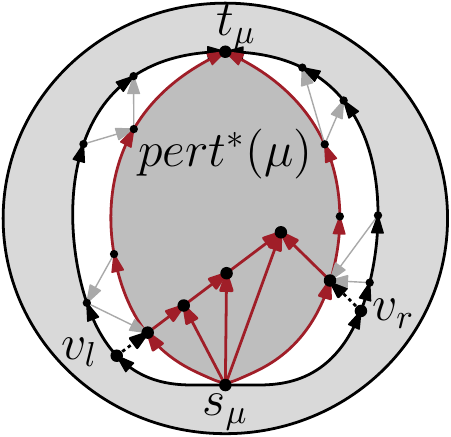}
    \caption{}
    \label{fig:replacement-a}
  \end{subfigure}
  \begin{subfigure}{.3\textwidth}
    \centering
    \includegraphics[page=2,width=.7\textwidth]{preference} 
    \caption{}
    \label{fig:replacement-b}
  \end{subfigure}
  \caption{Illustration for Lemma~\ref{le:preferable}.
    }
  \label{fig:replacement}
\end{figure}

\newcommand{\proofoflemmapreferrable}{
First, observe that, by removing from $G^*$ all the edges (gray edges in Fig.~\ref{fig:replacement-a}) connecting a vertex in $\pert(\mu)$ that is not a successor of $s_\mu$ and a vertex not in $\pert(\mu)$ that is not a successor of $s_\mu$, we obtain an augmentation $G^\diamond$ of $G$ such that (i) the subgraph of $G^\diamond$ induced by the vertices of $\pert(\mu)$ is $\pert^*(\mu)$ and (ii) pair $\left< G,G^\diamond\right>$ is of $v$-bitonic, for any vertex $v$ of $G$\footnote{We remark that these edges are never introduced by our algorithm, however, for the sake of generality we make no assumption on their absence in this proof.}. Therefore, in the following we assume that $G^*=G^\diamond$.

Let $\cal E$ be a planar embedding of $G^*$.
Consider the subgraph $G^-_\mu$ obtained by removing from $G^*$ all the vertices of $V(\pert(\mu)) \setminus \{s_\mu,v_\mu\}$ an their incident edges. 
Let $\cal E^-_\mu$ be the planar embedding of $G^-_\mu$ induced by $\cal E$.
Let $f$ be the face of $\cal E^-_\mu$ whose boundary used to enclose the removed vertices. Observe that, the poles $s_\mu$ and $t_\mu$ of $\mu$ belong to $f$.
Let $v_l$ and $v_r$ be successors of $s_\mu$ belonging to $G^-_\mu$ such that $v_l$ and $v_r$ are predecessors in $G^*$ of first successors of $\pert^*(\mu)$. Observe that, since we assumed $G^*=G^\diamond$, there exists exactly two vertices satisfying these properties.

Let $\cal E^+_\mu$ be a planar embedding of $\pert^+(\mu)$ in which $s_\mu$ and $t_\mu$ are incident to the outer face.
We now obtain plane graph $G^+=G^-_\mu \cup \pert^+(\mu)$ as follows. First, we embed $\pert^+(\mu)$ in the interior of $f$, identifying $s_\mu$ in $\pert^+(\mu)$ with $s_\mu$ in $f$ and $t_\mu$ in $\pert^+(\mu)$ with $t_\mu$ in $f$. Then, we insert two directed edges between a vertex in $G^-_\mu$ and a vertex of $\pert^*(\mu)$ as follows. We add a directed edge from $v_l$ to a first successor of $s_\mu$ in $\pert^+(\mu)$. Also, we add a directed edge from $v_r$ to the other first successor of $s_\mu$ in $\pert^+(\mu)$, if $\mu$ is not a Q-node, or to the same first successor of $s_\mu$ in $\pert^+(\mu)$ to which $v_l$ is now adjacent, otherwise.

To see that the directed graph $G^+$ is an \st-graph, observe that the added edges do not introduce any directed cycle as there
exists no directed path from a vertex in $\pert^+(\mu)$ to a vertex in $G^-_\mu$. Also, by construction, the subgraph of $G^+$ induced by
  the vertices of $\pert(\mu)$ is $\pert^+(\mu)$.

We now show that the pair $\left< G,G^+\right>$ is $v$-bitonic, for any $v$ in $G$.
Clearly, any vertex $v \notin \{s_\mu,v_l,v_r\}$ has the same successors in $G^+$ as in $G^*$, therefore $\left< G,G^+\right>$ is $v$-bitonic.
Further, by construction, $\left< G,G^+\right>$ is $s_\mu$-bitonic, that is, $\left< G,G^+\right>$ is of {\texttt Type B}; refer to Fig.~\ref{fig:replacement-b}.
Finally, since $v_l$ ($v_r$) is not adjacent in $G$ to any vertex in $\pert(\mu)$, the subgraph of $G^+$ induced by the successors of $v_l$ ($v_r$) in $G$ is the same as the subgraph of $G^*$ induced by the successors of $v_l$ ($v_r$) in $G$. This concludes the proof.
}

Consider a node $\mu$ of the SPQR-tree $T$ of $G$. We now show how to test the existence of a pair $\left< \pert^*(\mu), {\cal E}^* \right>$ such that (i) $\mu$ is
of {\texttt Type M} or, secondarily, of {\texttt Type B}, or report
that no such a pair exists, and (ii)  ${\cal E}^*$ is a planar embedding of $\pert^*(\mu)$.
In fact, by Lemma~\ref{le:preferable}, an embedding of $\mu$ of {\texttt Type M} would always be preferable to an embedding of {\texttt Type B}.

In any planar embedding ${\cal E}$ of $\pert(\mu)$ in which the poles are on the outer face $f_{out}$ of ${\cal E}$, we call {\em left path} ({\em right path}) of ${\cal E}$ the path that consists of the edges encountered in a clockwise traversal (in a counter-clockwise traversal) of the outer face of ${\cal E}$ from $s_\mu$ to $t_\mu$.

The following observation will prove useful to construct embedding ${\cal E}^*$. 

\begin{observation}\label{obs:ext-vert}
Let $\left< \pert^*(\mu), {\cal E}^* \right>$  be a pair such that $\left< \pert(\mu),\pert^*(\mu) \right>$ is $s_\mu$-bitonic and ${\cal E}^*$ is a planar embedding of $\pert^*(\mu)$ in which $s_\mu$ and $t_\mu$ lie on the external face. We have that:
\begin{enumerate}[(i)]
\item \label{obs:ext-vert-1} If $\mu$ is of {\texttt Type M}, then the first and the last successors of $s_\mu$ in $\pert^*(\mu)$ lie one on the left path and the other on the right path of ${\cal E}^*$. In particular, if the first and the last successor of $\mu$ are the same vertex, then such a vertex belongs to both the left path and the right path of ${\cal E}^*$.
\item \label{obs:ext-vert-2} If $\mu$ is of {\texttt Type B}, then the two first successors of $s_\mu$ in $\pert^*(\mu)$ lie one on the left path and the other on the right path of ${\cal E}^*$.
\end{enumerate}
\end{observation}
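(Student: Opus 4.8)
The plan is to reduce both statements to a single structural fact about the rotation of the successors of $s_\mu$ around $s_\mu$ in ${\cal E}^*$, and then to read off the two conclusions from the definitions of \emph{first} and \emph{last successor}. First I would fix notation. By Observation~\ref{obs:child-orientation} the vertex $s_\mu$ is the source of the \st-graph $\pert^*(\mu)$, so every edge incident to $s_\mu$ is outgoing and leads to a successor. Let $w_1,\dots,w_k$ be the successors of $s_\mu$ in $\pert(\mu)$, listed in the order in which the edges $(s_\mu,w_i)$ appear in the rotation at $s_\mu$ in ${\cal E}^*$, chosen so that $(s_\mu,w_1)$ is the first edge of the left path and $(s_\mu,w_k)$ is the first edge of the right path of ${\cal E}^*$. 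Since $s_\mu$ lies on the outer face, these are exactly the two outer-face edges at $s_\mu$, and hence $w_1$ lies on the left path and $w_k$ lies on the right path (when $k=1$, the single edge $(s_\mu,w_1)$ is the first edge of both paths, so $w_1$ lies on both).

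The crux is the second step: showing that the two endpoints of the successor path coincide with $w_1$ and $w_k$. By hypothesis $\langle\pert(\mu),\pert^*(\mu)\rangle$ is $s_\mu$-bitonic, so the subgraph $H$ of $\pert^*(\mu)$ induced by $\{w_1,\dots,w_k\}$ becomes, after deleting transitive edges, a path $p$ that spans all the successors. I would consider the subgraph consisting of $p$ together with the star of edges $(s_\mu,w_i)$: this is a fan, embedded as the restriction of ${\cal E}^*$. In any planar embedding of a fan the rotation at the apex follows the order in which the path visits the rim, since an edge of $p$ joining two successors $w_i,w_j$ that are non-consecutive in the rotation would have to cross some star edge $(s_\mu,w_\ell)$ with $i<\ell<j$. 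Consequently $p=w_1w_2\cdots w_k$ (up to reversal) and its two endpoints are precisely $w_1$ and $w_k$.

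It then remains to unwind the definitions. If $\mu$ is of \texttt{Type~M} then $p$ is monotonic, so by definition its endpoints are the first and the last successor of $s_\mu$; by the previous two steps these are $w_1$ and $w_k$, which lie on the left and the right path, respectively, proving part~(\ref{obs:ext-vert-1}); the degenerate case where the first and last successor coincide is exactly $k=1$, already handled above. If $\mu$ is of \texttt{Type~B} then $p$ is strictly bitonic, so by definition its endpoints are the two first successors of $s_\mu$; again these are $w_1$ and $w_k$, one on each path, proving part~(\ref{obs:ext-vert-2}).

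I expect the main obstacle to be precisely the fan/rotation argument of the second paragraph: one must argue carefully that the \emph{reduced} successor path respects the circular order of the star edges around $s_\mu$, and in particular that deleting transitive edges does not move the extreme successors $w_1,w_k$ off the ends of $p$. Once this topological fact is established, the identification of the combinatorially defined endpoints (the first/last successors) with the outer-boundary successors $w_1,w_k$ is immediate, and everything else follows by routine inspection of the definitions.
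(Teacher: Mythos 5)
You correctly flag the fan/rotation argument as the crux, and that is exactly where the proof fails. The claim ``in any planar embedding of a fan the rotation at the apex follows the order in which the path visits the rim'' is false: a star is a tree, so its embedding has a single face, and a chord $\{w_i,w_j\}$ between leaves that are non-consecutive in the rotation can be routed \emph{around} the intermediate leaves without crossing any star edge; planarity only forces the chords to be mutually non-crossing in the cyclic order, and nested chords sharing an endpoint are perfectly legal. Concretely, take $\pert(\mu)=\pert^*(\mu)$ to be the \st-graph on $\{s_\mu,w_1,w_2,w_3,t_\mu\}$ with edges $(s_\mu,w_1),(s_\mu,w_2),(s_\mu,w_3),(w_2,w_1),(w_1,w_3),(w_3,t_\mu)$, drawn upward with straight-line edges at $s_\mu=(0,0)$, $w_2=(-0.3,1)$, $w_1=(-1,2)$, $w_3=(0.5,3)$, $t_\mu=(0.5,4)$. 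This drawing is planar, $s_\mu$ and $t_\mu$ lie on the outer face, and the rotation at $s_\mu$ is $w_1,w_2,w_3$; the subgraph induced by the successors of $s_\mu$ is the monotonic path $w_2\to w_1\to w_3$, so the pair is of {\texttt Type~M} with first successor $w_2$ and last successor $w_3$. But the two outer boundary paths are $s_\mu,w_1,w_3,t_\mu$ and $s_\mu,w_3,t_\mu$, and $w_2$ is an internal vertex lying on neither. Hence the reduced successor path need not visit the successors in rotation order, and an endpoint of it need not be a rotation-extreme successor, which is precisely the identification your second paragraph tries to establish.

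This also means the gap is not repairable from the hypotheses you (and the observation, read literally) start from: the example above satisfies all of them and violates conclusion~(i), so no self-contained proof can exist. What makes the statement true where the paper uses it is an implicit invariant of the pairs $\left<\pert^*(\mu),{\cal E}^*\right>$ that the algorithm actually constructs: the augmenting edges are always inserted between successors that are \emph{consecutive in the left-to-right order} of the embedding (see the P-node and R-node constructions), and no augmenting edge is ever added at $s_\mu$ itself, so for those pairs the reduced successor path visits the successors of $s_\mu$ exactly in rotation order. Under that extra property your first and third paragraphs go through verbatim and the conclusion is immediate: the path endpoints are the two rotation-extreme successors, whose edges from $s_\mu$ bound the outer face and therefore lie one on the left path and one on the right path. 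The paper states this as an observation without any proof, consistent with that reading; a complete argument must either assume or re-establish the agreement between path order and rotation order, which your proof does not do.
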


We distinguish four cases based on whether node $\mu$ is an S-, P-, Q-, or R-node.

{\smallskip\noindent{\bf Q-node.}} 
Here, $\left< \pert(\mu),\pert(\mu) \right>$  is trivially of {\texttt Type M}, i.e., $\pert^*(\mu) =
\pert(\mu)$.

{\smallskip\noindent{\bf S-node.}} Let $e_1,\dots,e_k$ be the virtual
edges of $\skel(\mu)$ in the order in which they appear from the
source $s_\mu$ to the target $t_\mu$ of $\skel(\mu)$, and let $\mu_1,\dots,\mu_k$ be
the corresponding children of $\mu$, respectively.  We obtain $\pert^*(\mu)$ by
replacing each virtual edge $e_i$ in $\skel(\mu)$ with $\pert^*(\mu_i)$. Also,
we obtain the embedding ${\cal E}^*$ by arbitrarily selecting a flip
for each embedding ${\cal E}_i^*$ of $\pert^*(\mu_i)$. Clearly, 
node $\mu$ is of {\texttt Type M} if and only if $\mu_1$ is of
{\texttt Type M} and it is of {\texttt Type B}, otherwise.

{\smallskip\noindent{\bf P-node.}} Let $e_1,\dots,e_k$ be the virtual
edges of $\skel(\mu)$ and let $\mu_1,\dots,\mu_k$ be the corresponding
children of $\mu$, respectively.

First, observe that if there exists more than one child of $\mu$ that
is of {\texttt Type B}, then node $\mu$ does not admit an
augmentation $\pert^*(\mu)$ where $\left< \pert(\mu),\pert^*(\mu) \right>$
is $s_\mu$-bitonic. In fact, if there exist two such nodes
$\mu_i$ and $\mu_j$, then both the subgraphs of $\pert^*(\mu_i)$ and
$\pert^*(\mu_j)$ induced by the successors of $s_\mu$ in $\pert(\mu_i)$ and
in $\pert(\mu_j)$, respectively, contain an apex vertex. This implies that $s_\mu$ would have more than one apex.

Second, observe that if there exists a child $\mu_i$ of $\mu$ of
{\texttt Type B} and the edge $(s_\mu,t_\mu)$ belongs to $\pert(\mu)$,
then node $\mu$ does not admit an augmentation $\pert^*(\mu)$ such that $\left< \pert(\mu),\pert^*(\mu) \right>$ is $s_\mu$-bitonic. In
fact, $\pert^*(\mu_i)$ contains a apex of $s_\mu$ different from
$t_\mu$; this is due to the fact that edge $(s_\mu,t_\mu) \notin \pert^*(\mu_i)$. Also, vertex
$t_\mu$ must be an apex of $s_\mu$ in any augmentation 
$\pert^*(\mu)$ of $\pert(\mu)$ such that $\left< \pert(\mu),\pert^*(\mu) \right>$ is $v$-bitonic, for each vertex $v$ of $\pert(\mu)$. Namely, any augmentation $\pert^*(\mu)$ of $\pert(\mu)$
yields an \st-graph with source $s_\mu$ and sink $t_\mu$ and,
as such, no directed path exits from $t_\mu$ in
$\pert^*(\mu)$. As for the observation in the previous paragraph, this implies that $s_\mu$ would have more than one apex.

We construct $\pert^*(\mu)$ as follows. We embed $\skel(\mu)$ in such a way that the edge $(s_\mu,t_\mu)$, if any, or the virtual edge corresponding to the unique child of $\mu$ that is of {\texttt Type B}, if any, is the right-most virtual edge in the embedding. Let $e_1,\dots,e_k$ be the virtual edges of $\skel(\mu)$ in the order in which they appear clockwise around $s_\mu$ in $\skel(\mu)$.
Then, for each child $\mu_i$ of $\mu$, we choose a flip of embedding ${\cal E}_i^*$ such that a first successor of $s_\mu$ in $\pert^*(\mu_i)$ lies along the left path of ${\cal E}_i^*$. Now, for $i=1,\dots,k-2$, we add an edge connecting the last successor of $s_\mu$ in $\pert^*(\mu_i)$
 and the first successor of $s_\mu$ in $\pert^*(\mu_{i+1})$. 
Finally, we possibly add an edge connecting the last successor $v_l$ of $s_\mu$ in $\pert^*(\mu_{k-1})$ and a suitable vertex in $\pert^*(\mu_{k})$. Namely, if a node $\mu_k$ is of {\texttt Type B}, then we add an edge between $v_l$ and the first successor of $s_\mu$ in $\pert^*(\mu_k)$ that lies along the left path of ${\cal E}_k^*$. If $\mu_k$ is of {\texttt Type M} and it is not a Q-node, then we add an edge between $v_l$ and the first successor of $s_\mu$ in $\pert^*(\mu_k)$. Otherwise $\pert^*(\mu_k) = (s_\mu,t_\mu)$ and we add the edge $(v_l,t_\mu)$ if no such an edge belongs to $\pert^*(\mu_{k-1})$.

 Observe that, the added edges do not introduce any directed cycle as there exists no directed path from a vertex in $\pert^*(\mu_{i+1})$ to a vertex in $\pert^*(\mu_i)$. Further, by Observation~\ref{obs:ext-vert} the added edges do not disrupt planarity. Therefore, the obtained augmentation $\pert^*(\mu)$ of $\pert(\mu)$ is, in fact, a planar \st-graph.

 Finally, we have that node $\mu$ is of {\texttt Type M} if and only if $\mu_k$ is of {\texttt Type M}.

{\smallskip\noindent{\bf R-node.}}
The case of an R-node $\mu$ is
detailed in~\springerarxiv{\cite{cccdnptw-plddg-tr-17}.}{\ref{apx:R-node}.} 
For each node $v$ of $\skel(\mu)$, we have to consider the virtual edges $e_1,\dots,e_k$  of $\skel(\mu)$ exiting $v$ and the corresponding children $\mu_1,\dots,\mu_k$ of $\mu$, respectively. Similarly to the P-node case, we pursue an augmentation of $\pert(\mu)$ by inserting edges that connect $\pert(\mu_i)$ with $\pert(\mu_{i+1})$, with $i=1,\dots,k-1$.
Differently from the P-node case, however, more than one $\pert(\mu_i)$  may contain an edge between the poles of $\mu_i$. Further, also the faces of $\skel(\mu)$ may play a role, introducing additional constraints on the existence and the choice of the augmentation.


We have the following theorem.

\begin{theorem}\label{th:test-bitonic}
It is possible to decide in linear time whether a planar \st-graph $G$ admits a bitonic pair $\left< {\cal E}, \pi \right>$.
\end{theorem}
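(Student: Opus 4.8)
The plan is to establish Theorem~\ref{th:test-bitonic} by combining the structural characterization of Theorem~\ref{th:st-fixed-augmentation} with a bottom-up dynamic programming traversal of the SPQR-tree $T$ of $G$, and then to argue that the whole computation can be carried out in total linear time. By Theorem~\ref{th:characterization} and Theorem~\ref{th:st-fixed-augmentation}, deciding whether $G$ admits a bitonic pair $\left< {\cal E},\pi \right>$ is equivalent to deciding whether $G$ can be augmented to a planar \st-graph $G^*$ that is $v$-bitonic at every vertex $v$. By Lemma~\ref{le:edgeST} we may assume $(s,t) \in E$, so that $G$ is either the single edge $(s,t)$ (a trivial base case) or biconnected, and hence has a well-defined SPQR-tree rooted at the reference edge $(s,t)$.

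\medskip\noindent\textbf{The algorithm and its correctness.} First I would process the nodes of $T$ in a bottom-up (post-order) fashion. For each node $\mu$, using the per-node constructions already spelled out in the excerpt for Q-, S-, P-, and R-nodes, I compute a pair $\left< \pert^*(\mu),{\cal E}^* \right>$ such that $\left< \pert(\mu),\pert^*(\mu) \right>$ is $v$-bitonic for every internal vertex $v$, recording whether $\mu$ is realizable as {\texttt Type M} or only as {\texttt Type B}, or reporting failure. The key invariant, justified by Lemma~\ref{le:preferable}, is that {\texttt Type M} is always preferable: if a node admits a {\texttt Type M} augmentation, committing to it never prevents a global solution, so it suffices to store a single best pair per node rather than all feasible ones. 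This is what makes the greedy bottom-up decision sound: the choice made at a child never needs to be revisited at the parent. Correctness then follows because each per-node rule preserves the $v$-bitonicity invariant for all internal vertices (as argued in the Q-, S-, P-, and R-node paragraphs and in Observation~\ref{obs:ext-vert}), and the root node yields a globally $v$-bitonic augmentation $G^*$ exactly when one exists; by Theorem~\ref{th:st-fixed-augmentation}, any \st-ordering of $G^*$ is then a bitonic \st-ordering witnessing the desired pair.

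\medskip\noindent\textbf{Achieving linear time.} The excerpt presents the per-node constructions as an overall quadratic algorithm, so the remaining work is to refine it to linear time, and this is the part I expect to be the main obstacle. The naive cost arises from repeatedly recomputing first/last successors and apices, from re-embedding and flipping child components, and from scanning virtual edges around each pole in the P- and R-node cases. The plan is to maintain, for each node $\mu$, only a constant amount of summary information---its {\texttt Type} (M or B), the identities of its first and last successors of $s_\mu$ (and its apex, if any)---computed incrementally from the summaries of its children rather than from its full pertinent graph. The augmenting edges inserted at a node can be charged to the virtual edges they connect, so that the total number of insertions and the total work across all nodes is $O(|E|)$, matching the linear size of the SPQR-tree. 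For the R-node case, whose details are deferred to the appendix, the additional face constraints must likewise be checked by a local, constant-per-incidence scan of $\skel(\mu)$; since the SPQR-tree skeletons have total size linear in $|E|$, summing over all nodes keeps the total work linear.

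\medskip The heart of the argument is therefore twofold: the combinatorial soundness of the greedy {\texttt Type M}-over-{\texttt Type B} rule (provided by Lemma~\ref{le:preferable}), and the amortized accounting that turns the quadratic per-node description into a globally linear procedure. Assembling these, together with the linear-time construction and traversal of the SPQR-tree~\cite{dt-ogasp-90}, yields the claimed linear-time test for the existence of a bitonic pair $\left< {\cal E},\pi \right>$.
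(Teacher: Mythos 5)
Your proposal is correct and follows essentially the same route as the paper's proof: a bottom-up SPQR-tree traversal using the per-node constructions, greedy preference of {\texttt Type M} over {\texttt Type B} justified by Lemma~\ref{le:preferable}, the appeal to Theorem~\ref{th:st-fixed-augmentation} at the root, and linear time achieved by storing only constant-size summaries per node (type plus first/last successors) so that each node costs $O(|\skel(\mu)|)$ and the skeleton sizes sum to $O(|G|)$. The paper's proof states the same refinement, adding only the explicit remark that embeddings of the augmented pertinent graphs need not be computed during the traversal since any embedding of the root's augmentation induces a suitable embedding of $G$.
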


\begin{proof}
Let $\rho$ be the root of the SPQR-tree of $G$. The algorithm described above computes a pair $\left< \pert^*(\rho), {\cal E}^* \right>$ for $G$, if any exists, such that (i) the \st-graph $\pert^*(\rho)$ is an augmentation of $G$, (ii)  for any vertex $v$ of $G$, $\left< \pert(\rho),\pert^*(\rho) \right>$ is $v$-bitonic, and (iii) ${\cal E}^*$ is a planar embedding of $\pert^*(\rho)$. 
Let $\cal E$ be the restriction of ${\cal E}^*$ to $G$.
By Theorem~\ref{th:st-fixed-augmentation}, any \st-ordering $\pi$ of $\pert^*(\rho)$ is a bitonic \st-ordering of $G$ with respect to $\cal E$. Hence, $\left< {\cal E}, \pi \right>$ is a bitonic pair of~$G$.

We first show that the described algorithm has a quadratic running time. Then, we show how to refine it in order to run in linear time.
 For each node $\mu$ of $T$, the algorithm stores a pair $\left< \pert^*(\mu),  {\cal E} \right>$. Processing a node takes $O(|\pert^*(\mu)|)$ time. Since $|\pert^*(\mu)| \in O(|\pert(\mu)|)$, the overall running time is~$O(|G|^2)$. 

To achieve a linear running time,
observe that we do not need to compute the embeddings of the augmented pertinent graphs $\pert^*(\mu)$, for each node $\mu$ of $T$, during the bottom-up traversal of $T$. In fact, any embedding ${\cal E}^*$ of $\pert^*(\rho)$ yields an embedding ${\cal E}$ of $G$ such that $\pi$ is bitonic with respect to $\cal E$.  
To determine the endpoints of the augmenting edges, we only need to associate a constant amount of information with the nodes of $T$. Namely, for each node $\mu$ in $T$, we maintain (i) whether $\mu$ is of {\texttt Type B} or of {\texttt Type M}, (ii) if $\mu$ is of {\texttt Type M}, the first successor and the last successor of $s_\mu$ in $\pert^*(\mu)$, and (iii) if $\mu$ is of {\texttt Type B}, the two first successors of $s_\mu$ in $\pert^*(\mu)$. Therefore, processing a node takes $O(|\skel(\mu)|)$ time. 
Since the sum of the sizes of the skeletons of the nodes in $T$ is linear in the size of $G$~\cite{dt-opt-96}, the overall running time is linear. 
\end{proof}


\begin{corollary}\label{co:test-monotonic}
It is possible to decide in linear time whether a planar \st-graph $G$ admits a monotonically decreasing pair $\left< {\cal E}, \pi \right>$.
\end{corollary}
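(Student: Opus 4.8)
The plan is to recognize that the monotonically decreasing case is simply the \emph{apex-free} specialization of the bitonic case, so it can be decided by the algorithm of Theorem~\ref{th:test-bitonic} once we forbid one of its two node outcomes. First I would record a mirror-image reduction. A pair $\langle \mathcal{E},\pi\rangle$ is monotonically decreasing if and only if the pair $\langle \bar{\mathcal{E}},\pi\rangle$ obtained by reflecting $\mathcal{E}$ across a vertical axis is monotonically increasing, since reflection reverses the left-to-right successor list $S(v)$ of every vertex $v$ simultaneously while leaving $\pi$ unchanged. Hence $G$ admits a monotonically decreasing pair if and only if it admits a monotonically increasing pair, and it suffices to test for the latter, which is exactly the orientation that the augmentation machinery produces by default (cf. the handling of the newly added source in Lemma~\ref{le:edgeST}).

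Next I would establish the monotonic analogue of Theorem~\ref{th:st-fixed-augmentation}: a plane \st-graph $G$ admits a monotonically increasing \st-ordering if and only if it can be augmented with directed edges to a planar \st-graph $G^*$ such that, for every vertex $v$, the pair $\langle G,G^*\rangle$ is $v$-monotonic; moreover any \st-ordering of $G^*$ is then a monotonically increasing ordering of $G$. This follows from the argument behind Theorem~\ref{th:st-fixed-augmentation} with ``bitonic'' replaced throughout by ``monotonic'': a monotonically increasing ordering is precisely one in which the successors of each vertex induce, after deletion of transitive edges, a single directed path whose source lies on the left path and whose sink lies on the right path (the first part of Observation~\ref{obs:ext-vert}), that is, a $v$-monotonic structure with no apex. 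Any \st-ordering of $G^*$ respects the order along each such path, so the fixed embedding reads every $S(v)$ increasingly.

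I would then run the SPQR-tree traversal of Theorem~\ref{th:test-bitonic} essentially unchanged, but maintain the stronger invariant that $\langle \pert(\mu),\pert^*(\mu)\rangle$ is $v$-monotonic for \emph{every} vertex $v$ of $\pert(\mu)$; equivalently, every processed node must be of \texttt{Type~M}, and we report failure as soon as a node is forced to be of \texttt{Type~B}. Inspecting the case analysis: a Q-node is trivially of \texttt{Type~M}; an S-node is monotonic-feasible if and only if all of its children are of \texttt{Type~M}, so that its first child keeps $s_\mu$ monotonic and each shared pole, being the source of the next child, stays monotonic; a P-node is monotonic-feasible if and only if all of its children are of \texttt{Type~M}, in which case the same chaining by inserted edges from the last successor of one child to the first successor of the next turns the successors of $s_\mu$ into a single path with no apex (the previous special treatment of a \texttt{Type~B} child or an $(s_\mu,t_\mu)$ edge becomes moot). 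The R-node is handled as in the appendix, again rejecting every \texttt{Type~B} sub-augmentation and every skeleton-face configuration that would force an apex at a skeleton vertex.

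Finally, since we only restrict, never enlarge, the set of admissible augmentations, the constant amount of information stored per node (its type and its first/last successor of $s_\mu$) and the work spent per node are unchanged, so the same accounting over the total skeleton size gives linear time, and correctness is inherited from the monotonic analogue of Theorem~\ref{th:st-fixed-augmentation} together with the reflection reduction. The main obstacle I anticipate is the R-node step: one must verify that forbidding \texttt{Type~B} remains compatible with the face constraints that the R-node augmentation imposes, i.e.\ that an R-node admits a fully monotonic augmentation exactly when none of its internal chaining conditions nor any of its face conditions demands an apex. This is not a new calculation but a careful re-reading of the R-node construction under the monotonic restriction, rather than a fresh derivation.
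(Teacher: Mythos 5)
Your proposal is correct and takes essentially the same route as the paper: its proof of Corollary~\ref{co:test-monotonic} is precisely the observation that the algorithm of Theorem~\ref{th:test-bitonic} builds a pair of \texttt{Type~M} whenever possible (justified by Lemma~\ref{le:preferable}), so one simply rejects any instance in which a pair of \texttt{Type~B} is forced at some node. Your additional scaffolding---the reflection argument identifying monotonically decreasing with monotonically increasing pairs, the monotonic analogue of Theorem~\ref{th:st-fixed-augmentation}, and the node-by-node re-check (including apexes at R-node skeleton vertices)---merely spells out details that the paper's two-sentence proof leaves implicit.
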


\begin{proof}
The statement immediately follows from the fact that, in the algorithm described in this section, when computing a pair $\left< \pert^*(\mu), {\cal E}^* \right>$ for each node $\mu$ in $T$, a pair $\left< \pert(\mu),\pert^*(\mu) \right>$ of {\texttt Type M} is built whenever possible. Therefore, rejecting instances for which a pair $\left< \pert(\mu),\pert^*(\mu) \right>$ of {\texttt Type B} is needed yields the desired algorithm. 
\end{proof}

In conclusion, we have the following main result.

\begin{theorem}\label{th:test-upward-upwardrightward}
  It can be tested in linear time whether a planar \st-graph admits
  an upward- (upward-rightward-) planar L-drawing, and if so, such a drawing can be constructed in linear time. 
\end{theorem}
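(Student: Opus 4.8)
The plan is to combine the structural characterization of Theorem~\ref{th:characterization} with the linear-time recognition results of Theorem~\ref{th:test-bitonic} and Corollary~\ref{co:test-monotonic}, and then to show that the constructive ``$\Leftarrow$'' direction of Theorem~\ref{th:characterization} can be realized within the same time bound.

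For the decision part, I would argue as follows. By Theorem~\ref{th:characterization}, a planar \st-graph $G$ admits an upward- (resp.\ upward-rightward-) planar L-drawing if and only if it admits a bitonic (resp.\ monotonically decreasing) pair $\langle \mathcal{E},\pi\rangle$. Theorem~\ref{th:test-bitonic} and Corollary~\ref{co:test-monotonic} decide the existence of such a pair in linear time (internally reducing to the biconnected case via Lemma~\ref{le:edgeST} when $(s,t)\notin E$). This settles the testing claim immediately.

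It remains to produce an actual drawing in linear time once a positive answer has been returned. First I would materialize the pair $\langle\mathcal{E},\pi\rangle$. The proof of Theorem~\ref{th:test-bitonic} constructs, during the bottom-up SPQR-tree traversal and in linear time, an augmentation $G^*=\pert^*(\rho)$ of $G$ that is a planar \st-graph and for which $\langle\pert(\rho),\pert^*(\rho)\rangle$ is $v$-bitonic for every vertex $v$. Computing a planar embedding $\mathcal{E}^*$ of $G^*$ and a topological \st-ordering $\pi$ of $G^*$ are standard linear-time tasks; restricting $\mathcal{E}^*$ to $G$ yields $\mathcal{E}$, and by Theorem~\ref{th:st-fixed-augmentation} the ordering $\pi$ is then bitonic for $G$ with respect to $\mathcal{E}$. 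The monotonically decreasing case is handled identically using the pair from Corollary~\ref{co:test-monotonic}. I would then feed $\langle\mathcal{E},\pi\rangle$ into the incremental procedure of the ``$\Leftarrow$'' direction of Theorem~\ref{th:characterization}: process $v_1,\dots,v_n$ in the order given by $\pi$, set $y(v_i)=\pi(v_i)$, and add to the partial order $\prec$ the comparisons prescribed by the augmentation rule while maintaining the outer boundary $f_{\overline{G}_i}$ of $\overline{G}_i$; the $x$-coordinates are finally read off any linear extension of $\prec$.

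The step I expect to be the \emph{main obstacle} is establishing that this incremental construction runs in linear time rather than, say, quadratic time. For each $v_i$ the rule examines the predecessors $u_1,\dots,u_k$ of $v_i$, their neighbours $\ell$ and $r$ on the current boundary $f_{\overline{G}_i}$, and the maximum successor $s_{\max}$ of $u_1$; the cumulative cost is $\sum_i \deg(v_i)=O(|E|)=O(n)$ as long as each lookup and boundary update costs time proportional to the degree of the vertex involved. To guarantee this I would keep $f_{\overline{G}_i}$ as a doubly linked list of boundary vertices, exactly as in canonical-ordering straight-line drawing algorithms; because the predecessors of $v_i$ occur consecutively along this boundary, they can be enumerated in $\prec$-order, the neighbours $\ell,r$ identified, and the newly covered boundary spliced out, all in $O(\deg(v_i))$ time. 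The values $s_{\max}(\cdot)$ are precomputed once in linear time from $\mathcal{E}$ and $\pi$. Since only $O(1)$ relations are added to $\prec$ per vertex, $\prec$ has size $O(n)$ and a single topological sort produces the integer $x$-coordinates in linear time, completing the construction of the desired L-drawing.
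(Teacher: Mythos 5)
Your proposal is correct and follows essentially the same route as the paper: the decision part invokes Theorem~\ref{th:characterization} together with Theorem~\ref{th:test-bitonic} and Corollary~\ref{co:test-monotonic}, and the construction part runs the incremental ``$\Leftarrow$'' procedure of Theorem~\ref{th:characterization} on the pair produced by the testing algorithm. Your additional implementation details (extracting $\langle\mathcal{E},\pi\rangle$ from the augmentation $\pert^*(\rho)$, maintaining the outer boundary as a doubly linked list, precomputing $s_{\max}$, and topologically sorting $\prec$) merely flesh out the linear-time claims that the paper asserts by citation, so the two proofs coincide in substance.
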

\begin{proof}
  We first test in linear time whether a planar \st-graph admits a bitonic pair (Theorem~\ref{th:test-bitonic}) or a monotonically decreasing pair (Corollary~\ref{co:test-monotonic}).
  Then, Theorem~\ref{th:characterization} shows how to construct in linear time an upward- (upward-rightward-) planar L-drawing from a bitonic (monotonically decreasing) pair.\end{proof}

\section{Open Problems}\label{se:open}

Several interesting questions are left open:
Can we efficiently test  whether a directed plane
  graph admits a planar L-drawing?
Can we efficiently recognize the directed graphs that are edge
  maximal subject to having a planar L-drawing 
  (they have at most $4n-6$ edges where $n$ is the number of vertices---see~\springerarxiv{\cite{cccdnptw-plddg-tr-17}?}{\ref{se:proofsopen})?}
Does every upward-planar graph have a (not necessarily upward-)
  planar L-drawing? 
Can we extend the algorithm for computing a bitonic pair in the variable embedding setting to single-source multi-sink di-graphs?
Does every bimodal graph have a planar L-drawing?

\bibliographystyle{splncs03}
\bibliography{abbrv,eldrawings}

\ifarxive
\clearpage
\appendix
\else
\end{document}
\fi

\renewcommand{\thesection}{Appendix~\Alph{section}}

\section{SPQR Trees}\label{apx:spqr}

In this appendix we describe \emph{SPQR-trees}, a data structure introduced by Di Battista and Tamassia (see, e.g.,~\cite{dt-ogasp-90}) which allows to handle
the planar embeddings of an \st-biconnectible planar graph.

A graph is \emph{\st-biconnectible} if adding the edge $(s,t)$ yields a biconnected graph. Let $G$ be an \st-biconnectible graph. A \emph{separation pair} of $G$ is a pair of vertices whose removal disconnects the graph. A \emph{split pair} of $G$ is either a separation pair or a pair of adjacent vertices. A \emph{maximal split component} of $G$ with respect to a split pair $\{u, v\}$ (or, simply, a maximal split component of $\{u, v\}$) is either an edge $(u, v)$ or a maximal subgraph $G'$ of $G$ such that $G'$ contains $u$ and $v$, and $\{u, v\}$ is not a split pair of $G'$. A vertex $w \neq u,v$ belongs to exactly one maximal split component of $\{u, v\}$. We call \emph{split component} of $\{u, v\}$ the union of any number of maximal split components of $\{u, v\}$.

In this paper, we will assume that any SPQR-tree of a graph $G$ is rooted at one
edge of $G$, called \emph{reference edge}.

The rooted SPQR-tree $\mathcal{T}$ of a biconnected graph $G$, with respect to a reference edge $e$, describes a recursive decomposition of $G$ induced by its split pairs. The nodes of $\mathcal{T}$ are of four types: S, P, Q, and~R. Their connections are called \emph{arcs}, in order to distinguish them from the edges of $G$.

Each node $\mu$ of $\mathcal{T}$ has an associated \st-biconnectible multigraph, called the \emph{skeleton} of $\mu$ and denoted by $\skel(\mu)$. Skeleton $\skel(\mu)$ shows how the children of $\mu$, represented by ``virtual edges'', are arranged into $\mu$. The virtual edge in $\skel(\mu)$ associated with a child node $\nu$, is called the \emph{virtual edge of $\nu$ in $\skel(\mu)$}.

For each virtual edge $e_i$ of $\skel(\mu)$, recursively replace $e_i$ with the skeleton $\skel(\mu_i)$ of its corresponding child $\mu_i$. The subgraph of $G$ that is obtained in this way is the \emph{pertinent graph} of $\mu$ and is denoted by $G(\mu)$.

Given a biconnected graph $G$ and a reference edge $e=(u',v')$, the SPQR-tree $\mathcal{T}$ is recursively defined as follows. At each step, a split component $G^*$, a pair of vertices $\{u,v\}$, and a node $\nu$ in $\mathcal{T}$ are given. A node $\mu$ corresponding to $G^*$ is introduced in $\mathcal{T}$ and attached to its parent $\nu$. Vertices $u$ and $v$ are the \emph{poles} of $\mu$ and denoted by $u(\mu)$ and $v(\mu)$, respectively. The decomposition possibly recurs on some split components of $G^*$. At the beginning of the decomposition $G^* = G - \{e\}$, $\{u,v\}=\{u',v'\}$, and $\nu$ is a Q-node corresponding to $e$.

\begin{description}

\item[\em{Base Case:}] If $G^*$ consists of exactly one edge between $u$ and $v$, then $\mu$ is a Q-node whose skeleton is $G^*$ itself.

\item[\em{Parallel Case:}] If $G^*$ is composed of at least two maximal split components $G_1, \dots, G_{k}$ ($k \geq 2$) of $G$ with respect to $\{u,v\}$, then $\mu$ is a P-node. The graph $\skel(\mu)$ consists of $k$ parallel virtual edges between $u$ and $v$, denoted by $e_1, \dots, e_{k}$ and corresponding to $G_1, \dots, G_{k}$, respectively. The decomposition recurs on $G_1, \dots, G_{k}$, with $\{u,v\}$ as pair of vertices for every graph, and with $\mu$ as parent node.

\item[\em{Series Case:}] If $G^*$ is composed of exactly one maximal split component of $G$ with respect to $\{u,v\}$ and if $G^*$ has cut vertices $c_1, \dots, c_{k-1}$ ($k \geq 2$), appearing in this order on a path from $u$ to $v$, then $\mu$ is an S-node. Graph $\skel(\mu)$ is the path $e_1, \dots, e_k$, where virtual edge $e_i$ connects $c_{i-1}$ with $c_i$ ($i = 2, \dots ,k-1$), $e_1$ connects $u$ with $c_1$, and $e_k$ connects $c_{k-1}$ with $v$. The decomposition recurs on the split components corresponding to each of $e_1, e_2,\dots, e_{k-1}, e_{k}$ with $\mu$ as parent node, and with $\{u,c_1\}, \{c_1,c_2\},$ $\dots,$ $\{c_{k-2},c_{k-1}\}, \{c_{k-1},v\}$ as pair of vertices, respectively.

\item[\em{Rigid Case:}] If none of the above cases applies, the purpose of the decomposition step is that of partitioning $G^*$ into the minimum number of split components and recurring on each of them. We need some further definition. Given a maximal split component $G'$ of a split pair $\{s,t\}$ of $G^*$, a vertex $w \in G'$ \emph{properly belongs} to $G'$ if $w \neq s, t$. Given a split pair $\{s,t\}$ of $G^*$, a maximal split component $G'$ of $\{s,t\}$ is \emph{internal} if neither $u$ nor $v$ (the poles of $G^*$) properly belongs to~$G'$, \emph{external} otherwise. A \emph{maximal split pair} $\{s,t\}$ of $G^*$ is a split pair of $G^*$ that is not contained in an internal maximal split component of any other split pair $\{s',t'\}$ of $G^*$. Let $\{u_1,v_1\}, \dots, \{u_k,v_k\}$ be the maximal split pairs of $G^*$ ($k \geq 1$) and, for $i = 1, \dots, k$, let $G_i$ be the union of all the internal maximal split components of $\{u_i,v_i\}$. Observe that each vertex of $G^*$ either properly belongs to
exactly one $G_i$ or belongs to some maximal split pair $\{u_i,v_i\}$. The node $\mu$ is an R-node. The graph $\skel(\mu)$ is the graph obtained from $G^*$ by replacing each subgraph $G_i$ with the virtual edge $e_i$ between $u_i$ and $v_i$. The decomposition recurs on each $G_i$ with $\mu$ as parent node and with $\{u_i,v_i\}$ as pair of vertices.
\end{description}

For each node $\mu$ of $\mathcal{T}$ with poles $u$ and $v$, the construction of $\skel(\mu)$ is completed by adding a virtual edge $(u,v)$ representing the \emph{rest of the graph}, that is, the graph obtained from $G$ by removing all the vertices of $G(\mu)$, except for its poles, together with their incident edges.

The SPQR-tree $\mathcal{T}$ of a graph $G$ with $n$ vertices and $m$ edges
has $m$ Q-nodes and $O(n)$ S-, P-, and R-nodes. Also, the total number of
vertices of the skeletons stored at the nodes of $\mathcal{T}$ is $O(n)$.
Finally, SPQR-trees can be constructed and handled efficiently. Namely,
given a biconnected planar graph $G$, the SPQR-tree $\mathcal{T}$ of $G$
can be computed in linear time~\cite{dt-omtc-96,dt-opt-96,gm-lti-00}.

\section{Omitted Proofs}\label{apx:proofs}

In this appendix we give full versions of sketched or omitted proofs.

\subsection{Omitted Proofs of Section~\ref{sse:undirected}}

As a central building block for our hardness reduction we use a graph $W$ that can be constructed starting from a $4$-wheel with central vertex $c$ and rim $(u,v,w,z)$ such that vertices $v$ and $z$ are sinks and $u$ and $w$ are sources, by adding edges $(v,c)$, $(z,c)$, $(c,w)$, and $(c,u)$; see Fig.~\ref{fig:4wheel}. Note that the edges incident to $c$ come in pairs of both directions. We denote the vertices $v$ and $z$ as \emph{V-ports} of W and the vertices $u$ and $w$ as \emph{H-ports}. We first study
the properties of planar L-drawings of $W$.

\lemmawheel*

\begin{proof}
  In any orthogonal drawing of $W$, the outer cycle $(u,v,w,z)$ forms an orthogonal polygon $P$ with at least four convex corners. Since any two consecutive edges on the outer cycle have the same direction with respect to their common vertex $r \in \{u,v,w,z\}$, i.e., they are either both incoming or outgoing at $r$, they must use the same port or two opposite ports of $r$. In fact, if they would use the same port, they would form an angle of $2\pi$ in the outer face and force the edge $(r,c)$ to use the very same port. This, however, would imply that all three edges incident to $r$ have the same direction, which is a contradiction. Hence each of the four outer vertices has an angle of $\pi$ in the outer face and cannot form a convex corner of $P$. 
  
  Since there are four edges on the outer cycle, each of which has exactly one bend, this immediately implies that $P$ is a rectangle whose corners are formed by the bends of the four edges of the outer face and each of the four vertices of the outer face must lie on one of the rectangle sides. The remaining edges to $c$ use the port inside $P$, consistently bend once (left or right) from the perspective of $c$, and then connect to $c$ from all four sides. Figure~\ref{fig:4wheel} shows an example.
\end{proof}

\theoremhardness*

\begin{proof}
\proofoftheoremhardness
\end{proof}

  \begin{figure}[t]
    \centering
    \includegraphics[scale=1,page=2]{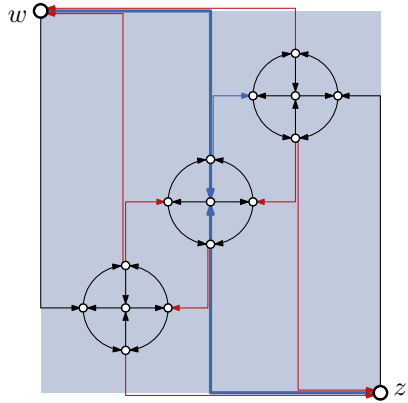}
    \caption{Edge gadget for an H-edge.}
    \label{fig:H-edge}
  \end{figure}

We remark that the graph $G'$ that we construct in our reduction is a simple directed graph. With the exception of the four spoke edges of the wheel graph~$W$ (see Fig.~\ref{fig:4wheel}) each underlying undirected would not have multi-edges. It is not difficult to extend our reduction so that the red and blue edges in Fig.~\ref{fig:4wheel} are removed from the gadget and the entire graph $G'$ becomes an oriented graph, i.e., a graph without 2-cycles. In that case, however, when we construct the intermediate staircase paths for the edges of the HV-drawing, we still use the removed ``mirrored'' L-shape for the first and last two segments of each edge path, which is always possible without crossings in any L-drawing of~$W$.  

\subsection{Omitted Proofs of Section~\ref{SEC:ports} Including the Relation with the Kandinsky Model}\label{apx:ILP}

\setcounterref{lemma}{LEMMA:kandinsky}
\addtocounter{lemma}{-1}
\begin{lemma}
  A graph has a planar L-drawing if and only if it admits a drawing in
  the Kandinsky model with the following properties
  \begin{enumerate}
  \item
    Each edge bends exactly once.
  \item
    At each vertex, the angle between two outgoing (or between two incoming) edges is 0 or $\pi$.
  \item
    At each vertex, the angle between an incoming edge and an outgoing edge is $\pi/2$ or $3\pi/2$.
  \end{enumerate}
\end{lemma}
\begin{proofwithwrapfig}
  Given a drawing in the Kandinsky model that meets Conditions~1-3, we
  can bundle the edges on the finer grid to lie on the coarser
  grid. It remains to perturb the coordinates such that the $x$- and
  $y$-coordinates, respectively, of the vertices are distinct: Assume,
  two vertices $v$ and $w$ have the same $y$-coordinate.  Let $\delta>0$
  be the minimum difference in $y$-coordinates between $v$ and any
  vertex or segment above $v$.  Since all edges have one bend, we can
  shift $v$ upward by $\delta/2$---changing only the drawing of edges
  incident to $v$. Doing this iteratively yields a planar
  L-drawing---or a rotation of $\pi/2$ of it.

  \begin{wrapfigure}[7]{R}{.30\textwidth}
    \centering
    \vspace{-1.0cm}
    \includegraphics[page=1]{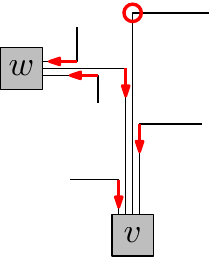}
  \end{wrapfigure}
  Given a planar L-drawing, we can distribute the edges on the finer
  grid maintaining the embedding. Since all vertices have distinct $x$-
  and $y$-coordinates, there are no empty faces. It remains to assign
  the bends to the vertices in order 
  to fulfill the bend-or-end
  property: For each port 
  (top, right, bottom, left) 
  of a vertex $v$, we assign all bends of
  incident edges, but the furthest to $v$ (see
  the figure on the right---the furthest bend of top of $v$ is
  encircled).
  Observe that if the bend on an edge $\{v,w\}$ is not a furthest bend
  for $v$ then it is a furthest bend for $w$.
  Thus, no bend will be assigned to two vertices.
\end{proofwithwrapfig}

\subsubsection{ILP formulation for the proof of Theorem~\ref{THEO:assigned_ports}}

\begin{enumerate}
\item The angles around a vertex $v$ sum to $2\pi$:
$\displaystyle
\sum_{f \textup{ incident } v} x_{vf} = \left\{\begin{array}{cl}
 2  \textup{ if } v \textup{ 1-modal}\\
 1  \textup{ if } v \textup{ 2-modal}\\
 0  \textup{ if } v \textup{ 4-modal}
\end{array}\right. 
$
\item\label{ITEM:oneBend:A} All edges are bent exactly once, i.e., for
  each edge $e=\{v,w\}$ separating the faces~$f$ and~$h$, we have
$
       x_{f e}^v + x_{he}^v + x_{fe}^w + x_{he}^w = 1.
$ 
\item\label{ITEM:faceBend:A} The number of convex angles minus the
  number of concave angles is 4 in each inner face and $-4$ in the
  outer face, i.e., for each face~$f$, we have
\[
\sum_{\hidewidth\substack{e=\{v,w\}\\\textup{ separating  }\\ f \textup{ and } h}\hidewidth} 
(x_{fe}^v - x_{he}^v + x_{fe}^w - x_{he}^w) + 
\sum_{\hidewidth\substack{v \textup{ on } f\\ \textup{not in/out}}\hidewidth} (2-2x_{vf}) +  \sum_{\hidewidth\substack{v \textup{ on } f\\ \textup{in/out}}\hidewidth} (2-(2x_{vf}+1)) = \pm 4.
\]
\item The bend-or-end property is fulfilled, i.e., for any two
  edges~$e_1$ and~$e_2$ that are consecutive around a vertex~$v$ and
  that are both incoming or both outgoing, and for the faces~$f_1$,
  $f$, and~$f_2$ that are separated by~$e_1$ and~$e_2$ (in the cyclic order
  around~$v$), it holds that $x_{vf} + x_{f_1 e_1}^v + x_{f_2e_2}^v \geq 1$.
\end{enumerate}
\noindent
Observe that (\ref{ITEM:oneBend:A}) implies $-x_{h e}^v - x_{he}^w   = x_{fe}^v  + x_{fe}^w-1$. Hence, (\ref{ITEM:faceBend:A}) yields
\begin{enumerate}
\item[\ref{ITEM:faceBend:A}'.]  $\displaystyle
  \sum_{\hidewidth\substack{e=\{v,w\} \\\textup{ incident }
      f}\hidewidth} (x_{fe}^v + x_{fe}^w) - \sum_{v \textup{ on } f}
  x_{vf} = \pm 2 + (\# \textup{ in/out-vertices on }f - \deg f)/2$.
\end{enumerate}

\setcounterref{theorem}{THEO:assigned_ports}
\addtocounter{theorem}{-1}
\begin{theorem}
  Given a directed plane graph $G$ and labels
  $\text{out}(e)\in\{\textup{top},\textup{bottom}\}$ and 
  $\text{in}(e)\in\{\textup{right},\textup{left}\}$
  for each edge~$e$, 
  it
  can be decided in linear time whether $G$ admits a planar L-drawing
  in which each edge $e$ leaves its tail at out$(e)$ and enters its
  head at in$(e)$.
\end{theorem}
\begin{proof}
  Observe that the labeling determines the bends, i.e., the value
  $x_{fe}^v+x_{fe}^w$ for each edge $e=(v,w)$ and each incident face
  $f$. First, we have to check whether the cyclic order of the edges
  around a vertex is compatible with the labels, i.e., in clockwise
  order we have outgoing edges labeled (top,$\cdot $), incoming edges
  labeled ($\cdot $,left), outgoing edges labeled (bottom,$\cdot $),
  and incoming edges labeled ($\cdot $,right). For a fixed port, edges
  bending to the right must precede edges bending to the left. We call
  an edge a \emph{middle edge} of a port if it is the last edge
  bending to the left or the first edge bending to the right. Observe
  that each port has zero, one, or two middle edges.

  If the compatibility check does not fail then the labels also
  determine the angles around the vertices, i.e., the variables
  $x_{vf}$ for each vertex $v$ and each incidence to a face $f$. Now, we
  check whether these values fulfill Conditions 1, 2, and 3'. 

  Finally, we have to check, whether Condition~4, i.e., the
  bend-or-end property can be fulfilled. To this end, we have to
  assign edges with concave bends to zero angles at an incident vertex
  in the same face. We must assign for each port of a vertex $v$, all
  but the middle edges to $v$. If at this stage an edge is assigned to
  two vertices, then $G$ does not admit a planar L-drawing with the
  given port assignment. Otherwise, it remains to deal with the zero
  angles between two middle edges of a port. To this end, consider the
  following graph $B$. The nodes are on one hand the ports with two
  middle edges and on the other hand the edges that are middle edges
  of at least one port and that are not yet assigned to a vertex. A
  port of a vertex $v$ and an edge $e$ are adjacent in $B$ if and only
  if $e$ is a middle edge of $v$. Observe that $B$ is a bipartite
  graph of maximum degree two and, thus, consists of paths, even
  length cycles, and isolated vertices. We have to test whether $B$
  has a matching in which every port node is matched. This is true if and
  only if no port is isolated and there is no maximal path starting
  and ending at a port node.
\end{proof}

\subsection{Omitted proofs of Sect.~\ref{se:char}}

\theoremcharacterization*

\begin{proof}
  Let $G=(V,E)$ be a planar \st-graph with $n$ vertices.

  \noindent
  ``$\Rightarrow$'': The $y$-coordinates of an upward-
  (upward-rightward-) planar L-drawing of~$G$ yield a bitonic
  (monotonically decreasing) \st-ordering $\pi$ with respect to the
  embedding $\mathcal E$ given by the L-drawing.  
  \smallskip

  \noindent
  ``$\Leftarrow$'': Given a bitonic (monotonically decreasing) \st-ordering $\pi$ of~$G$, we
  construct an upward- (upward-rightward-) planar L-drawing of $G$ using an idea of
  Gronemann~\cite{g-bsupg-GD16}.  For $i=1,\dots,n$, let $v_i \in V$
  be the vertex with $\pi(v_i)=i$, set the $y$-coordinate of~$v_i$
  to~$i$, and let~$G_i$ be the subgraph of~$G$
  induced by $V_i=\{v_1,\dots,v_i\}$.

  For the $x$-coordinates we construct a partial order $\prec$ in such a
  way that, for $i=2,\dots,n$, all vertices on the outer face of $G_i$
  are comparable and the L-drawing of $G_i$ is planar, embedding
  preserving, and has the property that any edge from~$V_i$ to~$V
  \setminus V_i$ can be added upward and in an embedding preserving
  way, no matter how we choose the $x$-coordinates of
  $v_{i+1},\dots,v_n$.

  During the construction, we augment~$G_i$ to $\overline{G}_i$ in such a way that the
  outer face~$f_{\overline{G}_i}$ of~$\overline{G}_i$ is a simple cycle.  We start by
  adding two artificial vertices $v_{-1}$ and $v_{-2}$ with
  $y$-coordinates $-1$ and $-2$, respectively, that are connected
  to~$v_1$ and to each other.  We set $v_{-2} \prec v_1 \prec v_{-1}$.
  Now let $i \in \{2,\dots,n\}$ 
  and assume that we have already fixed the relative coordinates
  of~$G_{i-1}$.  Let $u_1,\dots,u_k$ be the predecessors of $v_i$ in
  ascending order with respect to~$\prec$.

  If $\pi$ is monotonically decreasing or if $k=1$, we first augment
  the graph.  In the former case, we add to~$G$ an edge between~$v_i$
  and the right neighbor of~$u_k$ on~$f_{\overline{G}_{i-1}}$.
  In the latter case, let~$\ell$ and~$r$ be the left and the
  right neighbor of~$u_1$ on~$f_{\overline{G}_{i-1}}$, respectively; see
  Fig.~\ref{fig:bitonic-a}.  Following Gronemann~\cite{g-bsupg-GD16},
  we add a dummy edge from either~$\ell$ or~$r$ to~$v_i$:
  Let~$s_{\max}$ be
  the successor of~$u_1$ of maximum rank. We go in the circular order
  of the edges around~$u_1$ from~$u_1v_i$ to the left.  If we
  hit~$u_1s_{\max}$ before~$u_1\ell$, we insert the edge~$rv_i$
  into~$G$, otherwise the edge~$\ell v_i$.  Note that
  inserting the dummy edge does not violate planarity since, on that
  side, $u_k$ does not have any outgoing edge between $u_kv_i$
  and~$f_{\overline{G}_{i-1}}$.

  We now extend $\prec$.  Let $u_1,\dots,u_k$ be the $k \geq 2$
  predecessors of $v_i$ in the possibly augmented graph; see
  Fig.~\ref{fig:bitonic-b}.  Since $G$ has a sink only on the outer
  face, we can place $v_i$ anywhere between $u_1$ and $u_k$.  Adding
  the two conditions $u_{k-1} \prec v_i \prec u_k$ also sure that all
  edges except $(u_k,v_i)$ are rightward. But $(u_k,v_i)$ was
  introduced only as a dummy edge for the case of a monotonically
  decreasing~$\pi$.

  Any linear order that is compatible with $\prec$ yields unique
  $x$-coordinates in $\{1,\dots,n\}$ for the vertices of~$G$.  Together
  with the $y$-coordinates that we fixed above, we now have positions
  for the vertices in an upward- (upward-rightward-) planar L-drawing
  of~$G$.  Finally, we remove the dummy edges that we inserted earlier.
\end{proof}

\subsection{Omitted Proofs of Section~\ref{sse:directed}}\label{apx:R-node}

\lemmanewsource*

\begin{proof}

  We prove the {\em if} direction. Let $\pi'$ be a bitonic (resp.,
  monotonically increasing) \st-ordering of $G'$ and let $\cal E'$ be
  a planar embedding of $G'$ compatible with $\pi$. We construct a
  ranking $\pi: V \rightarrow \{1,\dots,|V|\}$ by setting
  $\pi(v)=\pi'(v)-1$, for each $v \in V$. Also, we set ${\cal E}$ to
  the restriction of $\mathcal E'$ to $G$. Clearly, $\pi$ is a bitonic
  (resp., monotonically increasing) \st-ordering of~$G$ that is
  consistent with $\cal E$.

  We now prove the {\em only if}
  direction. Let $\pi$ be a bitonic (resp., monotonically increasing)
  \st-ordering of $G$ and let $\cal E$ be a planar embedding of $G$
  compatible with $\pi$.  We construct a ranking $\pi'= V' \rightarrow
  \{1,\dots,|V'|\}$ as follows: We set (i) $\pi'(s')=1$ and (ii)
  $\pi'(v)=\pi(v)+1$, for each $v \in V$.  We construct a planar
  embedding $\cal E'$ of $G'$ starting from $\cal E$ by drawing $s'$
  in the outer face of $\cal E$ and by routing edge $(s',t)$ so that
  vertex $t$ is the right-most successor of $s'$ in the left-to-right
  order of the successors of $s'$ around $s'$.
  We show that $\pi'$ is a bitonic (resp., monotonically increasing)
  \st-ordering of $G'$ and that $\cal E'$ is consistent with
  $\pi'$. Since, for each vertex $v \in V$, the ranks of the
  successors of $v$ in $\pi'$ have all been decreased by $1$ and since
  the left-to-right order of the successors of $v$ is the same in
  $\cal E'$ as in $\cal E$, it follows that such ranks form a bitonic
  (resp., monotonically increasing) sequence in $\pi'$ if and only if
  they do so in $\pi$. Also, $s$ and $t$ are the successors of $s'$ and
  $\pi(s)<\pi(t)$. Hence, the ranks of the successors of $s'$ form a
  monotonically increasing sequence. This concludes the proof of the
  lemma.
\end{proof}

\lemmapreferable*

\begin{proof}
\proofoflemmapreferrable
\end{proof}

\subsubsection{Details for the R-node Case}$ $

{\smallskip\noindent{\bf R-node.}} 
Recall that, by Observation~\ref{obs:child-orientation}, the skeleton of a node $\mu$ of $T$ is an \st-graph between its poles $s_\mu$ and~$t_\mu$. 

For each vertex $v \neq t_\mu$, let $e_1,\dots,e_k$ be the virtual edges exiting $v$ in the order in which they appear clockwise around $v$ in $\skel(\mu)$, and let $\mu_i$ be the node of $T$ corresponding to $e_i$.
First, observe that if there exists more than one virtual edge $e_i$ exiting from $v$ whose corresponding child $\mu_i$  is of {\texttt Type B}, then node $\mu$ does not admit an augmentation $\pert^*(\mu)$ such that $\left< \pert(\mu),\pert^*(\mu) \right>$ is $s_\mu$-bitonic. In fact, as shown for the P-node case, this implies that $s_\mu$ would have more than one apex.
We aim at (i) selecting a flip for each $\pert^*(\mu_i)$ and (ii) adding an edge between a vertex in $\pert^*(\mu_i)$ and a vertex in $\pert^*(\mu_{i+1})$, with $i=1,\dots,k-1$, in order to obtain an augmentation $\pert^*(\mu)$ of $\pert(\mu)$  such that $\left< \pert(\mu),\pert^*(\mu) \right>$ is $s_\mu$-bitonic. In particular, such edges will either be directed from the last successor of $v$ in $\pert^*(\mu_i)$ to a first successor of $v$ in $\pert^*(\mu_{i+1})$ ({\em right edges}) or from the last successor of $v$ in $\pert^*(\mu_{i+1})$ to a first successor of $v$ in $\pert^*(\mu_{i})$ ({\em left edges}). Observe that, in any augmentation $\pert^*(\mu)$ of $\pert(\mu)$ such that $\left< \pert(\mu),\pert^*(\mu) \right>$ is $s_\mu$-bitonic, for each pair of consecutive virtual edges $e_i$ and $e_{i+1}$ exiting $v$, either a left edge or the alternative right edge is introduced connecting a vertex in $\pert(\mu_i)$ with a vertex in $\pert(\mu_{i+1})$.

We assign a label in $\{L,R\}$ to some of the faces of $\skel(\mu)$ as follows. 
For each face $f$ of $\skel(\mu)$ incident to two consecutive virtual edges exiting $v$, we say that $v$ is the {\em source vertex} of $f$ if it is the source of the \st-graph induced by the edges incident to $f$, and $t_f$ is the {\em sink vertex} of $f$ if it is the sink of the \st-graph induced by the edges incident to $f$. Consider the two virtual edges $e_l=(v, v_l)$ and $e_r=(v,v_r)$ exiting $v$ and incident to $f$, where $e_l$ precedes $e_r$ in the clockwise order of the edges exiting $v$. 
If $v_l = t_f$ and $(s_f,v_l) \in \pert^*(\mu_l)$, we assign label $L$ to $f$.
If $v_r = t_f$ and $(s_f,v_r) \in \pert^*(\mu_r)$, we assign label $R$ to $f$.
Observe that, in any augmentation $\pert^*(\mu)$ of $\pert(\mu)$ such that $\left< \pert(\mu),\pert^*(\mu) \right>$ is $s_\mu$-bitonic, faces with label $L$ (with label $R$) must be traversed by a left edge (resp. right edge). In fact, vertex $t_f$ is also the sink of the \st-graph induced by the edges incident to the face of $\pert(\mu)$ corresponding to $f$; also, the alternative edges with respect to those inserted would exit $t_f$ and hence would introduce a directed cycle in $\pert^*(\mu)$. We remark that, augmenting an unlabeled face with any of the two alternative edges does not introduce any directed cycles. This is due to the fact that there exists no directed path connecting an internal vertex in $\pert^*(\mu_l)$ with an internal vertex in $\pert^*(\mu_r)$. Hence, in the following we can assume that the obtained augmentation $\pert^*(\mu)$ of $\pert(\mu)$ is an acyclic \st-graph.

Based on the type of the children $\mu_1,\dots,\mu_k$ of $\mu$ and on the labeling of the faces of which $v$ is the source vertex, one of the following three claims applies.

\begin{claimx}
\label{cl:caseone}
If no child of $\mu$ corresponding to a virtual edge exiting $v$ is of {\texttt Type B} and if $v$ is not the source of two faces of $\skel(\mu)$ labeled $L$ and $R$, respectively, then $\pert(\mu)$ can be augmented  in such a way that $\mu$ is of {\texttt Type M}.
\end{claimx}

\begin{proof} Suppose that $v$ is not the source of any $R$-labeled face (resp., of any $L$-labeled face). For each $i=1,\dots,k$, we select the flip of ${\cal E}_i$ such that the last successor of $v$ in $\pert^*(\mu_i)$ lies on the left path of $\pert^*(\mu_i)$ (resp., on the right path of $\pert^*(\mu_i)$).
For each $i=1,\dots,k-1$, we add a left edge (resp., right edge) directed from the last successor of $v$ in $\pert^*(\mu_{i+1})$ (resp., in $\pert^*(\mu_{i})$) to the first successor of $v$ in $\pert^*(\mu_{i})$ (resp., in $\pert^*(\mu_{i+1})$). By Case~(\ref{obs:ext-vert-1}) of Observation~\ref{obs:ext-vert}, the introduced edges do not affect planarity. 
Finally, by the fact that all the nodes corresponding to the virtual edges exiting $v$ are of {\texttt Type M} and by the choice of the left and right edges, the obtained augmentation $\pert^*(\mu)$ of $\pert(\mu)$ is such that $\left< \pert(\mu),\pert^*(\mu) \right>$ is $v$-monotonic. 
\end{proof}

\begin{claimx}
\label{cl:casetwo}
If exactly one child $\mu_b$ of $\mu$ corresponding to a virtual edge exiting $v$ is of {\texttt Type B}, then  
$\pert(\mu)$ can not be augmented  in such a way that $\mu$ is of {\texttt Type M} while it can be augmented in such a way that $\mu$ is of {\texttt Type B} if and only if 
all the faces of $\skel(\mu)$ of which $v$ is the source vertex labeled $R$ (resp., labeled $L$) precede $e_b$ (resp., follow $e_b$) clockwise around $v$.
\end{claimx}

\begin{proof}
Clearly, in this case node $\mu$ cannot be of {\texttt Type M}.
First, observe that if vertex $v$ is the source of an $L$-labeled face $f_L$ of $\skel(\mu)$ that precedes the virtual edge $e_b$ clockwise around $v$, then node $\mu$ cannot be of {\texttt Type B} either. In fact,
in any augmentation $\pert^*(\mu)$ of $\pert(\mu)$ the subgraph of $\pert^*(\mu)$ induced by the successors of $v$ in $\pert(\mu)$ would contain the left edge traversing $f_L$ that points away from the apex of $v$ in $\pert^*(\mu_B)$. 
Analogously, observe that if vertex $v$ is the source of an $R$-labeled face $f_R$ of $\skel(\mu)$ that follows the virtual edge $e_b$ clockwise around $v$, then node $\mu$ cannot be of {\texttt Type B}.
Therefore, it remains to consider the case in which all the faces of $\skel(\mu)$ of which $v$ is the source vertex that are labeled $R$ (resp., labeled $L$) precede $e_B$ (resp., follow $e_b$) clockwise around $v$.
We can then augment $\pert(\mu)$ in such a way that $\left< \pert(\mu),\pert^*(\mu) \right>$ is strictly $v$-bitonic as follows.
For $i=1,\dots,b-1$, we select the flip of ${\cal E}_i$ such that the last successor of $v$ in $\pert^*(\mu_i)$ lies on the right path of $\pert^*(\mu_i)$; also, for $i=b+1,\dots,k$, we select the flip of ${\cal E}_i$ such that the last successor of $v$ in $\pert^*(\mu_i)$ lies on the left path of $\pert^*(\mu_i)$. 
For $i=1,\dots,b-1$, we add a right edge directed from the last successor of $v$ in $\pert^*(\mu_i)$ to a first successor of $v$ in $\pert^*(\mu_{i+1})$; also, for $i=b,\dots,k-1$, we add a left edge directed from the last successor of $v$ in $\pert^*(\mu_{i+1})$ to a first successor of $v$ in $\pert^*(\mu_{i})$. Clearly, the obtained augmentation of $\pert(\mu)$
is such that $\left< \pert(\mu),\pert^*(\mu) \right>$ is strictly $v$-bitonic and, by Observation~\ref{obs:child-orientation}, $\pert^*(\mu)$ is also planar.
\end{proof}

\begin{claimx}
\label{cl:casethree}
If no child of $\mu$ corresponding to a virtual edge exiting $v$ is of {\texttt Type B} and $v$ is the source vertex of at least one $R$-labeled face and of one $L$-labeled face, then  
$\pert(\mu)$ can not be augmented  in such a way that $\mu$ is of {\texttt Type M} while it can be augmented in such a way that $\mu$ is of {\texttt Type B} if and only if 
all the faces of $\skel(\mu)$ of which $v$ is the source vertex labeled $R$ precede the faces labeled $L$ clockwise around $v$.
\end{claimx}

\begin{proof}
First, observe that if vertex $v$ is the source of two faces $f_L$ and $f_R$ of $\skel(\mu)$ labeled $L$ and $R$, respectively, such that $f_L$ precedes $f_R$ clockwise around $v$, then there exists no
augmentation $\pert^*(\mu)$ of $\pert(\mu)$ such that $\left< \pert(\mu),\pert^*(\mu) \right>$ is $v$-bitonic. In fact, in any augmentation $\pert^*(\mu)$ of $\pert(\mu)$ the subgraph of $\pert^*(\mu)$ induced by the successors of $v$ in $\pert(\mu)$ would contain the left edge traversing $f_L$ followed by the right edge traversing $f_R$; clearly, this precludes a bitonic path.
Therefore, it only remains to consider the case in which all the faces labeled $R$ precede all the faces labeled $L$ in the clockwise order around $v$. 
Node $\mu$ cannot be of {\texttt Type M} as the existence of an apex vertex of $v$ is implied by the presence of both a left and a right edge.

We can then augment $\pert(\mu)$ in such a way that
$\left< \pert(\mu),\pert^*(\mu) \right>$ is strictly $v$-bitonic as follows.
Let $e_c$ be any virtual edge exiting $v$ such that all the $R$-labeled faces precede $e_c$ clockwise around $v$ and such that all the $L$-labeled faces follow $e_c$ clockwise around $v$.
We apply the same strategy as in the proof of Claim
~\red{2}
 to select a flip for each embedding ${\cal E}_i$ and to introduce left and right edges to obtain $\pert^*(\mu)$, where $e_c$ has the role of $e_b$. Therefore, $\pert^*(\mu)$ is planar and $\left< \pert(\mu),\pert^*(\mu) \right>$ is strictly $v$-bitonic. Also, the apex of $v$ is the last successor of $v$ in $\pert^*(\mu_c)$. 
\end{proof}
{}

\subsection{Omitted Proofs of the Open Problems Section}\label{se:proofsopen}

\begin{figure}
  \centerline{\includegraphics{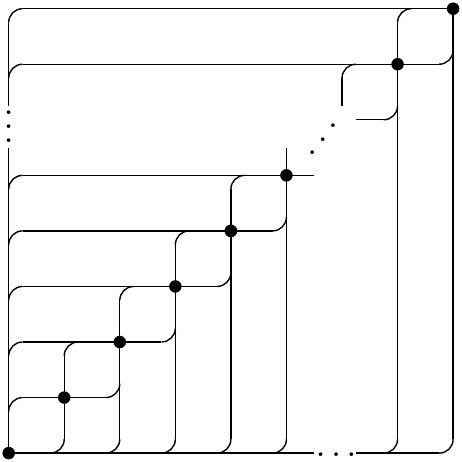}}
  \caption{\label{FIG:max}An L-planar graph with $n$ vertices and $4n-6$ edges.}
\end{figure}
\setcounter{lemma}{4}
\begin{lemma}
  A graph with $n$ vertices that admits a planar, upward-planar, or
  upward-rightward-planar L-drawing has at most $4n-6$, $3n-6$, or
  $2n-3$ edges and these bounds are tight.
\end{lemma}
\begin{proof} In the following let $n$ denote the number of vertices of the considered graph.
\begin{description} 
\item[planar:] Consider for each port of a vertex the furthest
  bend. Recall that the bend on any edge is the furthest bend of at
  least one of its end vertices. On the other hand each vertex has at
  most four furthest bends.  Thus there can be at most $4n$ edges.
  Consider now the outer face. The topmost (bottommost, rightmost,
  leftmost) vertex doesn't have a furthest bend at its top (bottom,
  right, left) port. Moreover in a maximal L-planar drawing there are
  at least two edges $e_1$ and $e_2$ on the outer face such that its
  bend is a furthest bend of both end vertices: Consider the
  bottommost vertex $v$. If $v$ is neither the leftmost nor the
  rightmost vertex, let $u_1$ and $u_2$ be the leftmost and rightmost
  vertex such that there is an edge $e_1=(u_1,v)$ and $e_2=(u_2,v)$,
  respectively. If $v$ is the leftmost (rightmost) vertex, let $u$ be
  the rightmost (leftmost) vertex such that there is an edge
  $e_1=(u,v)$ and let $w$ be the topmost vertex such that there is an
  edge $e_2=(v,w)$. This yields the $4n-6$ bound. Finally,
  Fig.~\ref{FIG:max} indicates a graph with $4n-6$ edges.
\item[upward-planar:] By Corollary~\ref{COR:undirected}, every maximal
  undirected graph oriented according to a bitonic st-ordering is a
  directed graph with $3n-6$ edges admitting an upward-planar
  L-drawing. Since upward-planar graphs must be acyclic, they cannot
  contain 2-cycles. Thus, there are at most $3n-6$ edges.
\item[upward-rightward-planar:]Each vertex has at most two furthest
  bends. The bottommost vertex has no furthest bend to the left, the
  rightmost vertex has no furthest bend to the top and in a maximal
  upward-rightward planar L-drawing there is at least one bend that is
  furthest for both end vertices. Hence, there are at most $2n-3$
  edges. Omitting all but the upward-rightward edges in
  Fig.~\ref{FIG:max} yields a graph with $2n-3$ edges.
\end{description}
\end{proof}

\end{document}

<!-- Local IspellDict: en_US -->